\documentclass[11pt,reqno]{amsart}

\usepackage[T1]{fontenc}
\usepackage[utf8]{inputenc}
\usepackage{lmodern}
\usepackage{microtype}
\usepackage{amsmath,amssymb,amsthm,mathtools}
\usepackage{mathrsfs}
\usepackage{enumitem}
\usepackage{booktabs}
\usepackage{array}
\usepackage[hidelinks]{hyperref}

\setlist[itemize]{leftmargin=1.6em,itemsep=0.3em,topsep=0.3em}
\setlist[enumerate]{leftmargin=1.8em,itemsep=0.3em,topsep=0.3em}

\numberwithin{equation}{section}

\newtheorem{theorem}{Theorem}[section]
\newtheorem{proposition}[theorem]{Proposition}
\newtheorem{lemma}[theorem]{Lemma}
\newtheorem{corollary}[theorem]{Corollary}

\newtheorem{question}[theorem]{Question}

\theoremstyle{definition}
\newtheorem{definition}[theorem]{Definition}
\newtheorem{example}[theorem]{Example}

\theoremstyle{remark}
\newtheorem{remark}[theorem]{Remark}

\DeclareMathOperator{\ord}{ord}
\DeclareMathOperator{\PV}{PV}

\newcommand{\ii}{\mathrm{i}}

\title[Polynomial rigidity of strong-field magnetic billiards]
{Polynomial rigidity of strong-field magnetic billiards}

\author{Dipesh Bhandari}
\address{Department of Physics, Southern Methodist University, Dallas, Texas 75275, USA}
\email{dbhandari@smu.edu}

\date{\today}

\subjclass[2020]{37J40, 37J35, 14H50}
\keywords{magnetic billiards, polynomial first integrals, Birkhoff rigidity, algebraic curves, normalization, Larmor centers, strong magnetic field}

\begin{document}

\begin{abstract}
A magnetic billiard describes a charged particle constrained to a planar domain: the particle moves along circular Larmor arcs in the interior and undergoes specular reflection at the boundary.  The round disk has an explicit first integral that is polynomial in the velocity, and a central rigidity question asks whether any other smooth convex table can have such an integral.

We answer this question negatively in the strong-field regime.  Let \(\Omega\subset\mathbb R^2\) be a bounded strictly convex domain with smooth boundary \(\gamma\), let \(r=|B|^{-1}\) be the Larmor radius, and assume \(0<r<r_0(\gamma)/2\), where \(r_0(\gamma)\) is the maximal embedded tubular radius.  If the magnetic billiard admits a nonconstant first integral polynomial in the velocity variables, of any finite degree, then \(\Omega\) is a disk.  This removes the finite exceptional set of strong field strengths left by the earlier polynomial nonintegrability theory.

The rigidity mechanism has two logically independent stages.  First, the highest reflection mode gives a boundary winding identity.  This determines the degree of the top coefficient and places all of its roots strictly inside the table, but it does not show that those roots coincide.  Second, after the two leading reflection identities are continued to the normalization of the complexified boundary, their valuations at infinity exclude simultaneous poles of the coordinate functions.  The remaining one-sided poles force the entire top coefficient to be one linear factor of multiplicity equal to the Fourier degree.  Combining the location theorem with this root-collapse theorem produces a constant-angle relation between the boundary tangent and a radial direction, hence circularity.  No real-analytic boundary hypothesis is imposed: analyticity follows from the algebraic strong-field parallel curves.
\end{abstract}

\maketitle

\section{Introduction}

\subsection*{The geometric question}

A planar magnetic billiard is one of the simplest systems in which geometry, reflection, and a Lorentz-type force interact.  A unit-speed charged particle moves inside a bounded domain
\[
\Omega\subset\mathbb R^2
\]
under a constant magnetic field perpendicular to the plane.  Instead of traveling along straight chords, as in an ordinary Birkhoff billiard, it follows oriented circular arcs of fixed Larmor radius
\[
r=\frac{1}{\beta},\qquad \beta=|B|>0,
\]
and reflects specularly when it reaches the boundary.  The model is useful both as a deformation of classical billiard dynamics and as an idealized description of charged-particle motion in confined magnetic geometries.  It has therefore been studied from dynamical, geometric, algebraic, numerical, and semiclassical viewpoints; see, for example, \cite{RobnikBerry1985,BerglundKunz1996,Tabachnikov2004,Bialy2012,BialyMironov2016,AlbersBanhattiHerrmann2017,BialyMironovSurvey2018,BialyMironovShalom2020}.

The basic integrable example is the round disk.  If the disk is centered at the origin, then
\begin{equation}
h(x,v)=x_1^2+x_2^2+\frac{2}{\beta}(v_1x_2-v_2x_1)
\label{eq:disk-integral}
\end{equation}
is preserved both by the magnetic flow and by reflection.  This example leads to the question that motivates the paper:
\begin{center}
\begin{minipage}{0.86\textwidth}
\emph{Can a smooth strictly convex noncircular table support a nonconstant first integral that is polynomial in the velocity?}
\end{minipage}
\end{center}

Polynomial first integrals are a deliberately rigid notion of exact solvability.  They do not represent every possible form of integrability, but they turn a dynamical question into a finite algebraic one: the reflection law imposes identities among finitely many Fourier modes, and those identities constrain the geometry of the boundary.  This algebraic viewpoint is the magnetic analogue of a major rigidity theme in ordinary billiards, where one asks how strongly the existence of invariant structure determines the shape of the table.

\subsection*{Why the strong-field regime is special}

The strong-field condition means that the Larmor radius is small relative to the normal tubular scale of the boundary.  In the present paper we assume
\[
0<r<\frac{r_0(\gamma)}2,
\]
where \(r_0(\gamma)\) is the maximal radius of an embedded tubular neighborhood.  In this regime the relevant parallel curves remain smooth and embedded, and every oriented Larmor arc can be described by its center.  Passing from the particle position and velocity to the Larmor center is the key geometric simplification: an integral of the magnetic flow becomes a function on an annulus in the center plane, while reflection becomes an algebraic relation on boundary-centered circles.

This dual description was developed systematically by Bialy, Mironov, and Shalom \cite{BialyMironovShalom2020}.  Their results supply the bridge from polynomial integrability to a polynomial in the center coordinates and to algebraic parallel curves.  The remaining problem is then a rigidity problem for the leading reflection modes of that polynomial.

\subsection*{What was known and what remained open}

Bialy and Mironov proved that polynomial integrability forces the boundary to be algebraic and to satisfy strong additional restrictions.  In particular, they excluded every noncircular ellipse for every nonzero field strength \cite{BialyMironov2016}.  In the strong-field setting, Bialy, Mironov, and Shalom showed that, for a fixed noncircular table, polynomial integrability is impossible for all but finitely many field magnitudes \cite{BialyMironovShalom2020}.  Thus their theorem left a sharply defined possibility: a noncircular table might conceivably survive at one of finitely many exceptional strong fields.  The expectation that no such exceptional values should exist was already recorded in the integrable-systems literature \cite{BolsinovMatveevMirandaTabachnikov2018}; recent surveys and open-problem collections continue to place magnetic-billiard rigidity in the wider program of classifying integrable billiards \cite{BialyMironovSurvey2018,BialyMironovSurvey2025,OpenProblems2026}.

The present paper closes that exceptional-field gap in the smooth strictly convex strong-field class and does so uniformly in the degree of the polynomial integral.

\begin{theorem}[Polynomial rigidity in a strong magnetic field]\label{thm:main}
Let \(\Omega\subset\mathbb{R}^{2}\) be a bounded strictly convex domain with smooth boundary
\[
\gamma=\partial\Omega.
\]
Let \(r=|B|^{-1}\) be the Larmor radius of a constant nonzero magnetic field, and assume
\begin{equation}
0<r<\frac{r_0(\gamma)}{2},
\label{eq:strong-field}
\end{equation}
where \(r_0(\gamma)\) is the maximal radius of an embedded tubular neighborhood of \(\gamma\).

Assume that the magnetic billiard admits a nonconstant polynomial first integral in the sense of Definition \ref{def:polynomial-integral}, with coefficients smooth on the swept phase-space region. Then \(\Omega\) is a disk.
\end{theorem}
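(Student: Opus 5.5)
The plan follows the two-stage mechanism announced in the abstract, after first passing to Larmor centers. Under the strong-field assumption \eqref{eq:strong-field}, each oriented Larmor arc is determined by its center. Invoking the Bialy--Mironov--Shalom reduction \cite{BialyMironovShalom2020}, a nonconstant polynomial first integral of degree \(N\) then becomes a nonconstant polynomial \(F(c)\) in the center coordinates \(c=(c_1,c_2)\). The same reduction makes the parallel curves \(\gamma_{\pm r}\) (the loci of centers of boundary-touching circles) algebraic. Both \(\gamma_{\pm r}\) are smooth and embedded, since \(r<r_0/2\).

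Reflection at \(\gamma(s)\) exchanges the centers \(\gamma(s)+r e^{\ii\theta}\) and \(\gamma(s)+r e^{\ii(2\varphi(s)-\theta)}\), where \(\varphi\) is the tangent angle. Invariance therefore reads
\[
F\bigl(\gamma(s)+re^{\ii\theta}\bigr)=F\bigl(\gamma(s)+re^{\ii(2\varphi-\theta)}\bigr),
\]
which is a Laurent polynomial identity in \(e^{\ii\theta}\). I would write \(F\) in complex coordinates as \(\sum a_{jk} z^j\bar z^k\) and expand in \(e^{\ii\theta}\). Comparing the top Fourier mode \(e^{\ii n\theta}\), with \(n\) the Fourier degree, gives an identity of the form \(P_n(\gamma(s))=e^{2\ii n\varphi(s)}\overline{P_n(\gamma(s))}\) up to conjugation conventions, where \(P_n\) is the top coefficient, a polynomial in \(z\). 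Analyticity of \(\gamma\) follows because \(\gamma\) is a parallel curve of the algebraic curves \(\gamma_{\pm r}\), so I can freely continue these identities.

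\textbf{Stage 1 (location).} Taking arguments in the top-mode identity gives \(\arg P_n(\gamma(s))\equiv n\varphi(s)\pmod\pi\). Since \(\varphi\) winds once around \(\gamma\), the argument principle says \(P_n\) has exactly \(n\) zeros in \(\Omega\), counted with multiplicity. Because \(F\) is polynomial, \(\deg_z P_n\le n\), so \(\deg P_n=n\) and all roots lie strictly inside \(\Omega\). I must also exclude zeros of \(P_n\) on \(\gamma\) itself; here I would use the mode \(n-1\) identity together with strict convexity.

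\textbf{Stage 2 (root collapse).} This is the main obstacle. I would take the two leading mode identities (modes \(n\) and \(n-1\)) and eliminate the tangent factor \(e^{2\ii\varphi}=\gamma'/\bar\gamma'\) to obtain an algebraic relation between \(z=\gamma\) and \(w=\bar\gamma\) on the complexified boundary curve \(C\). I would then pass to the normalization \(\tilde C\) and analyse the valuations of \(z\) and \(w\) at the points over infinity. First, the identities would show that simultaneous poles of \(z\) and \(w\) give inconsistent orders: the leading-term balance requires \(\deg P_n\) to match a strictly different pole order. At a point where only \(z\) has a pole, \(w\) tends to a finite value \(w_0\). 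The mode-\(n\) identity then forces \(P_n(w)\)-type factors, in the conjugate picture, to vanish to order \(n\) at \(w_0\). The same value \(w_0=\bar a\) must recur at every one-sided pole, so \(P_n(z)=\lambda(z-a)^n\). I would first try this by comparing orders of vanishing place by place, using the sum-of-valuations-equals-zero formula on \(\tilde C\).

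\textbf{Conclusion.} With \(P_n=\lambda(z-a)^n\), the top identity becomes \(\arg(\gamma(s)-a)\equiv \varphi(s)+\text{const}\pmod{\pi/n}\). By continuity the angle between the tangent \(\gamma'(s)\) and the radial direction \(\gamma(s)-a\) is then constant. If that angle is not \(\pi/2\), \(\gamma\) is a logarithmic spiral, which cannot be closed. Hence the angle is \(\pi/2\), the function \(|\gamma-a|\) is constant, and \(\Omega\) is a disk.
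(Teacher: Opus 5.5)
The overall architecture is the paper's: Larmor centers, top-mode winding on the real oval, normalization, valuations at infinity, one-sided poles, and a constant-angle conclusion. However, the step that carries Stage 2 would not work as you describe it.

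\textbf{Stage 2: eliminating the tangent factor loses the contradiction.} Write $F(Z,W)=\sum c_{jk}Z^jW^k$, and let $P_n(w)$ and $Q_n(z)=\overline{P_n(\bar z)}$ be its top row and column polynomials. The two leading identities are $P_n(w)u^n=(-1)^nQ_n(z)$ and $A_{n-1}u^{n-1}=(-1)^{n-1}B_{n-1}$. Here $A_{n-1}=P_{n-1}(w)+nzP_n(w)+r^2P_n'(w)$, $B_{n-1}$ is its conjugate, and $u=t^2=dz/dw$. Eliminating $u$ gives $Q_n^{\,n-1}A_{n-1}^{\,n}=P_n^{\,n-1}B_{n-1}^{\,n}$ on $C$. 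This discards exactly the information the pole analysis needs.

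Take a hypothetical simultaneous pole $z\sim A\xi^{-a}$, $w\sim B\xi^{-b}$. There the term $nzP_n(w)$ strictly dominates $A_{n-1}$. Both sides of the eliminated relation then have the same order $-n^2(a+b)$. Their leading coefficients differ only by the ratio $p/q$ of the leading coefficients of $P_n$ and $Q_n$. Hermitian symmetry $c_{jk}=\overline{c_{kj}}$ gives $p=q=c_{nn}\in\mathbb R$, so the leading terms cancel and nothing is excluded.

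So there is no order mismatch; the obstruction is a sign. The two identities force $u\sim-(A/B)\,\xi^{b-a}$. Differentiating the expansions gives $u=dz/dw\sim\frac{aA}{bB}\,\xi^{b-a}$. Comparing the two yields $a/b=-1$, which is impossible. You must therefore keep $u$ as the meromorphic function $dz/dw$ on $\widehat C$ and compare its leading term with the one the identities force.

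The one-sided step needs the same input. If $z\sim A\xi^{-a}$ and $w-w_0\sim B\xi^{b}$, then $\operatorname{ord}u=-a-b$. If $w_0$ is a root of $P_n$ of multiplicity $k$, comparing $kb-n(a+b)$ with $-na$ gives $k=n$. A single such pole suffices, since $\deg P_n=n$. The function $z$ has a pole on the compact $\widehat C$, and that pole is one-sided once simultaneous poles are excluded. This is for $n\ge2$; when $n=1$, Stage 1 already gives a single interior root.

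\textbf{Stage 1: two unsupported steps.}

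First, $\deg P_n\le n$ does not follow from polynomiality of $F$; the published reduction gives only total degree $\le 2n$. You need the bidegree argument: if $F$ had $Z$-degree $N>n$, the coefficient of $\zeta^N$ in $F(z+r\zeta,\bar z+r\zeta^{-1})$ would be $r^N\sum_k c_{Nk}\bar z(s)^k$. This vanishes on $\gamma$, hence identically. The bound is load-bearing, because it is what turns the winding count into $\deg Q_n=n$ with no exterior roots.

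Second, the argument principle presupposes that $Q_n$ has no roots on $\gamma$. Your plan to exclude them with the mode-$(n-1)$ identity and strict convexity is not explained. At a boundary root $z_0$ that identity reduces to $\overline X\,t^{2(n-1)}=(-1)^{n-1}X$ with $X=Q_{n-1}(z_0)+r^2Q_n'(z_0)$. That is a phase condition of the same type as at every other point, so it gives no contradiction.

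The top identity alone suffices. Near a boundary root of multiplicity $\nu$, $t\,Q_n'/Q_n=\nu/(s-s_0)+O(1)$ has real singular part. So $\operatorname{Im}(t\,Q_n'/Q_n)=n\kappa$ is integrable across the root, and its integral over $\gamma$ is $2\pi n$. A principal-value half-residue computation gives $2\pi N_{\mathrm{in}}+\pi N_\partial$ for the same integral. Hence $2N_{\mathrm{in}}+N_\partial=2n$. Combined with $N_{\mathrm{in}}+N_\partial\le\deg Q_n\le n$, this forces $N_\partial=N_{\mathrm{out}}=0$.

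\textbf{Algebraicity of $\gamma$.} The compact normalization requires $\gamma$ to be algebraic, not merely real analytic. In addition to analyticity, you need the fixed-offset algebraicity step, obtained by elimination from the incidence equations of the normal offset.
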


The theorem has three features worth emphasizing.  First, it excludes \emph{every} field value in the stated strong-field interval, rather than all but finitely many.  Second, it places no upper bound on the degree of the polynomial integral.  Third, it begins with a smooth boundary; the real analyticity needed for the complex-algebraic part of the argument is derived from integrability rather than assumed.

\subsection*{How the proof works}

The proof has a deliberately split core.  The first stage is a real-boundary argument using only the top reflection mode; it determines the number and location of the roots of the top coefficient.  The second stage is a complex argument at infinity using the top two modes; it forces those already located roots to coalesce.  Schematically,
\[
\underbrace{\text{top mode + boundary winding}}_{\text{finite plane}}
\Longrightarrow
\substack{\deg Q_m=m,\\ \text{all roots lie in }\Omega},
\]
whereas
\[
\underbrace{\text{top and next-to-top modes + valuations}}_{\text{points at infinity}}
\Longrightarrow
Q_m(z)=\lambda(z-a)^m.
\]
Neither statement alone gives circularity: the first allows \(m\) distinct interior roots, while the second by itself does not locate the repeated root.  The proof combines them as follows.

\begin{enumerate}
\item \textbf{Move to Larmor-center space.}
A polynomial first integral of minimal true Fourier degree \(m\) becomes a polynomial \(F\) in the center coordinates.  A sharp Laurent-mode argument improves the available total-degree estimate to complex bidegree at most \((m,m)\).

\item \textbf{Stage I: determine the degree and location of the top roots.}
The highest Fourier mode gives
\[
\overline{Q_m(z)}\,t^{2m}=(-1)^mQ_m(z)
\]
along the physical boundary, where \(t\) is the unit tangent and \(Q_m\) is a one-variable polynomial.  Logarithmic differentiation turns this identity into a winding formula.  A principal-value half-index calculation proves
\[
\deg Q_m=m,
\qquad
N_{\mathrm{in}}=m,
\qquad
N_{\partial}=N_{\mathrm{out}}=0.
\]
This is a location theorem only: the \(m\) interior roots may still be distinct.

\item \textbf{Pass from the real oval to its compact normalization.}
The algebraic strong-field parallel curves imply that the original smooth boundary is real analytic and algebraic.  On the normalization \(\widehat C\) of its irreducible complexification, the squared tangent becomes the meromorphic function
\[
u=t^2=\frac{dz}{dw},
\]
and the two leading reflection identities extend globally.

\item \textbf{Stage II: force complete root coalescence.}
At a hypothetical point where both \(z\) and \(w\) have poles, the top identity gives one relation between the leading coefficients and pole orders.  It does not by itself yield a contradiction.  The next-to-top identity gives a second relation with the opposite reflection parity; comparing the two forces the impossible equality \(a=-b\) for positive pole orders.  Every pole is therefore one-sided.  At a one-sided pole, the top identity then forces multiplicity exactly \(m\) at the finite limiting coordinate, so
\[
Q_m(z)=\lambda(z-a)^m.
\]
The index theorem from Stage I supplies the additional information \(a\in\Omega\).

\item \textbf{Return to geometry.}
The \(m\)-fold root makes the boundary tangent form a constant angle with the radial direction from \(a\).  The radial distance has constant derivative, and closedness forces that derivative to vanish.  Hence the table is a disk centered at \(a\).
\end{enumerate}

\subsection*{Relation to earlier rigidity theorems}

The theorem belongs to the algebraic-integrability branch of magnetic-billiard rigidity and should not be confused with results assuming total integrability.  Bialy's Hopf-type theorem starts from a global foliation hypothesis and proves circularity on constant-curvature surfaces \cite{Bialy2012}.  By contrast, Theorem \ref{thm:main} assumes the existence of one polynomial first integral.  Polynomial nonintegrability on the sphere and hyperbolic plane was developed separately by Bialy and Mironov \cite{BialyMironov2019}.

\begin{center}
\small
\begin{tabular}{@{}>{\raggedright\arraybackslash}p{0.18\textwidth}>{\raggedright\arraybackslash}p{0.28\textwidth}>{\raggedright\arraybackslash}p{0.44\textwidth}@{}}
\toprule
Result & Hypothesis & Conclusion \\
\midrule
Bialy (2012) & Total integrability of the billiard map & Circular boundary on constant-curvature surfaces \\
Bialy--Mironov (2016) & Polynomial first integral & Algebraicity and strong restrictions; noncircular ellipses excluded for every nonzero field \\
Bialy--Mironov--Shalom (2020) & Polynomial first integral in the strong-field regime & For a fixed noncircular table, all but finitely many field magnitudes are excluded \\
Present work & Polynomial first integral of arbitrary finite true degree in the strong-field regime & No exceptional strong-field values in the stated class: the table is a disk \\
\bottomrule
\end{tabular}
\end{center}

The theorem does not prove that every totally integrable strong-field magnetic billiard is circular, because total integrability need not produce a polynomial first integral.  It also does not address weak or intermediate fields for which the global Larmor-center and parallel-curve framework used here may fail.

\subsection*{Published input and new contribution}

The proof uses the following established strong-field facts from \cite{BialyMironovShalom2020}: a polynomial integral induces a smooth function \(F\) on the annulus of Larmor centers; the bounded Fourier content on boundary-centered Larmor circles makes \(F\) a polynomial in the center coordinates; \(F\) is constant on the two parallel boundary curves; and polynomial integrability makes those parallel curves nonsingular algebraic ovals.

The new part of the argument is degree-independent.  It consists of the sharp bidegree reduction, the top-mode half-index identity, the global meromorphic continuation of the reflection laws, the two-mode pole obstruction, and the resulting complete root collapse.  Appendix \ref{app:offset-bridge} also gives an independent incidence-variety proof of the fixed-offset algebraicity implication used in the passage from an algebraic parallel curve to the original boundary.

\subsection*{Organization of the paper}

Section \ref{sec:setup} introduces the strong-field geometry, the swept phase-space region, and the minimal true Fourier degree.  Section \ref{sec:center} constructs the Larmor-center polynomial and proves its sharp bidegree bound.  Section \ref{sec:reflection} derives the reflection identities and isolates their two leading modes.  Section \ref{sec:bridge} upgrades the boundary to an algebraic curve and carries the identities to its compact normalization.  Section \ref{sec:index} uses the top mode to locate the roots of \(Q_m\).  Section \ref{sec:poles} analyzes the points at infinity and forces those roots to coalesce.  Section \ref{sec:mainproof} converts root collapse into circularity.  Section \ref{sec:lowdegree} illustrates the general mechanism in low degrees, and the final section discusses the scope and possible extensions of the method.

\section{Geometry and integrability in the strong-field regime}
\label{sec:setup}

This section fixes the geometric conventions and the precise notion of polynomial integrability used in the paper.  The two points that matter later are that the strong-field condition gives a well-behaved family of parallel curves and that the degree must be measured on the unit-velocity shell, where the relation \(v_1^2+v_2^2=1\) can lower the apparent polynomial degree.

Let \(\Omega\subset\mathbb{R}^{2}\) be a bounded strictly convex domain with positively oriented smooth boundary
\[
\gamma=\partial\Omega.
\]
Let \(s\) be arclength, and write
\[
\gamma(s)=(x(s),y(s)).
\]
We identify \(\mathbb{R}^{2}\cong\mathbb{C}\) by
\[
z=x+\ii y.
\]
The positively oriented unit tangent is
\[
t(s)=z'(s)=e^{\ii\tau(s)},
\]
and the signed curvature is
\[
\kappa(s)=\tau'(s).
\]
We do not assume that \(\kappa\) is pointwise strictly positive: for the index argument below, the only global fact required is the turning-number identity
\[
\int_\gamma \kappa\,ds=2\pi
\]
for the positively oriented convex Jordan curve.

Reversing the orientation of the plane if necessary, we may and shall assume that the magnetic field is positive. The case \(B<0\) is equivalent under orientation reversal. Let \(J\) be counterclockwise rotation through \(\pi/2\), identified in complex notation with multiplication by \(\ii\). A unit-speed magnetic trajectory then follows a positively oriented Larmor circle of radius
\[
r=\beta^{-1},
\qquad
\beta=|B|>0.
\]

Let \(r_0(\gamma)\) denote the maximal radius for which the normal exponential map of \(\gamma\) defines an embedded tubular neighborhood. We assume throughout
\begin{equation}
r<\frac{r_0(\gamma)}{2}.
\end{equation}
Under this hypothesis the parallel curves
\[
\gamma_{\pm r}(s)=\gamma(s)\pm rJ\dot\gamma(s)
\]
are smooth and embedded, and so are all parallel curves at distances up to \(2r\); compare \cite[\S2]{BialyMironovShalom2020}.

\subsection{The relevant phase-space region}

Following \cite{BialyMironovShalom2020}, let \(\Omega_+\) denote the annular region swept out by all Larmor arcs lying inside \(\Omega\) and intersecting \(\partial\Omega\). Equivalently, in the notation of that work, \(\Omega_+\) is the region bounded by \(\gamma\) and \(\gamma_{+2r}\). The precise orientation convention is irrelevant for the proof, provided it is fixed consistently.

\begin{definition}[Polynomial first integral]\label{def:polynomial-integral}
A function
\begin{equation}
\Phi(x,v)=\sum_{0\leq k+\ell\leq N}a_{k\ell}(x)v_1^kv_2^\ell,
\qquad
a_{k\ell}\in C^\infty(\Omega_+),
\label{eq:polynomial-integral}
\end{equation}
defined on the unit tangent bundle of \(\Omega_+\), is called a polynomial first integral of the magnetic billiard if:
\begin{enumerate}
\item \(\Phi\) is constant along every magnetic-flow arc between collisions;
\item \(\Phi\) is invariant under specular reflection at every boundary point.
\end{enumerate}
\end{definition}

This is the strong-field definition used in \cite[Definition 2.2]{BialyMironovShalom2020}.

\subsection{True Fourier degree}

On the unit shell, write
\[
v=(\cos\varphi,\sin\varphi).
\]
The restriction of \(\Phi\) to unit velocities is a finite Fourier series
\begin{equation}
\Phi(x,\varphi)=\sum_{k=-m}^{m}a_k(x)e^{\ii k\varphi},
\qquad
a_{-k}=\overline{a_k}.
\label{eq:true-fourier}
\end{equation}

\begin{definition}[Minimal true Fourier degree]
The minimal integer \(m\geq0\) for which \eqref{eq:true-fourier} holds is called the \emph{true Fourier degree} of \(\Phi\) on the unit-velocity shell.
\end{definition}

A polynomial in \(v_1,v_2\) of nominal degree \(N\) may have smaller true Fourier degree because
\[
v_1^2+v_2^2=1
\]
on the unit shell. We always work with the minimal true degree \(m\), and we never replace the integral by a polynomial expression that artificially increases this degree.

\begin{lemma}[The zero-mode case is trivial]\label{lem:m-zero}
If the true Fourier degree is \(m=0\), then the first integral is constant on the connected phase-space region. Consequently every nonconstant polynomial first integral has \(m\geq1\).
\end{lemma}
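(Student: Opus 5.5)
If \(m=0\), then on the unit shell \(\Phi(x,\varphi)=a_0(x)\) depends on position only. The plan is to show that \(a_0\) is constant along the flow and therefore constant on \(\Omega_+\). The reflection condition then gives nothing extra, since reflection does not change position.

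\emph{Step 1.} Take any point \(x_0\in\Omega_+\) and any unit velocity \(v\). In the strong-field regime, \(\Omega_+\) is exactly the region swept by Larmor arcs inside \(\Omega\) that meet \(\partial\Omega\). So a short magnetic-flow arc through \((x_0,v)\) stays in \(\Omega_+\), at least for times small enough that it stays away from collisions.

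\emph{Step 2.} Along that arc, condition (1) of Definition \ref{def:polynomial-integral} makes \(a_0(x(t))\) constant. Differentiate at \(t=0\). The velocity of the arc is \(v\), so this gives
\[
\nabla a_0(x_0)\cdot v=0 .
\]
This holds for every unit vector \(v\), hence \(\nabla a_0(x_0)=0\). Since \(x_0\) was arbitrary, \(\nabla a_0\equiv0\) on \(\Omega_+\).

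\emph{Step 3.} The region \(\Omega_+\) lies between the embedded curves \(\gamma\) and \(\gamma_{+2r}\), so it is an annulus and in particular connected. A smooth function with vanishing gradient on a connected open set is constant, so \(a_0\) is constant there. Its phase-space region, the unit tangent bundle over this annulus, is connected as well, so \(\Phi\) is constant on it.

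\emph{Conclusion.} By contraposition, every nonconstant polynomial first integral has \(m\ge1\).

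The only point needing care is Step 1: every direction must be realized by an interior flow arc at every point. Interior points pose no difficulty, since the flow is defined there for short times in every direction. Points on the boundary edge \(\gamma\), if they are included in \(\Omega_+\), are handled by continuity of \(\nabla a_0\) up to the boundary.
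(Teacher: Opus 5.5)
Your argument is correct, but it globalizes differently from the paper. The paper differentiates $a_0$ along the flow only at boundary points $x\in\gamma$ and only for inward unit vectors. These form an open semicircle, which still spans $\mathbb{R}^2$, so $\nabla a_0=0$ on $\gamma$ and $a_0|_\gamma$ is constant because $\gamma$ is connected. The paper then carries this constant value into $\Omega_+$ along Larmor arcs that meet $\partial\Omega$, using the swept-region description of $\Omega_+$.

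You instead make $\nabla a_0$ vanish at every point of $\Omega_+$ using the full circle of directions. You then use connectedness of the annulus $\Omega_+$, which does follow from the tubular embedding at radius $2r<r_0$. Your route is more local and needs no information about which arcs reach the boundary.

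In Step 1, the reason a short arc stays in $\Omega_+$ is that $\Omega_+$ is open. The swept-region description plays no role there, so your ``So'' is a non sequitur, though a harmless one.

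What the paper's route buys is that it only uses genuine billiard states: vectors based on $\gamma$, and arcs that actually collide. It would therefore survive if the phase space were restricted to states whose Larmor circle meets $\partial\Omega$. Your route relies on the definition granting invariance in every direction at every point of $\Omega_+$, which Definition~\ref{def:polynomial-integral} as stated does grant. It also survives that restriction after a small adjustment. At each interior point of $\Omega_+$, the admissible directions form a nonempty open arc of the unit circle rather than the whole circle. That open arc still spans $\mathbb{R}^2$, just as the paper's semicircle does.
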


\begin{proof}
If \(m=0\), then on the unit tangent bundle
\[
\Phi(x,v)=a_0(x)
\]
is independent of the velocity. Fix a boundary point \(x\in\gamma\). For every inward unit vector \(v\) at \(x\), invariance along the magnetic flow gives
\[
0=\frac{d}{dt}\Phi(g^t(x,v))\Big|_{t=0}
=
\nabla a_0(x)\cdot v.
\]
The inward unit vectors form an open semicircle, so the only vector orthogonal to all of them is the zero vector. Hence
\[
\nabla a_0(x)=0
\qquad
\text{for every }x\in\gamma.
\]
In particular, \(a_0|_\gamma\) is constant because \(\gamma\) is connected.

By definition, every point of the swept region \(\Omega_+\) lies on at least one Larmor arc contained in \(\Omega\) and intersecting \(\partial\Omega\). Along such an arc the function \(\Phi=a_0\) is constant, so its value at the interior point equals the constant boundary value. Therefore \(a_0\), and hence \(\Phi\), is constant on the whole relevant phase-space region.
\end{proof}

\begin{remark}[No degree-inflating normalization]
If the center-space polynomial takes constants \(c_+\) and \(c_-\) on the two parallel curves, it is convenient in some arguments to replace \(F\) by
\[
(F-c_+)(F-c_-).
\]
We do not do this here. That replacement can double the true Fourier degree and would destroy the sharp top-mode structure used below.
\end{remark}

\begin{example}[The disk]
For a disk centered at the origin, \eqref{eq:disk-integral} is a nonconstant polynomial integral. Its restriction to the unit shell has true Fourier degree one.
\end{example}

\section{From a velocity integral to a Larmor-center polynomial}
\label{sec:center}

The first decisive reduction is to replace the particle state by the center of its Larmor circle.  This turns invariance along magnetic arcs into an ordinary function \(F\) on the center plane.  The published strong-field theory shows that \(F\) is polynomial; the new point of this section is that its two complex degrees are each controlled by the minimal Fourier degree of the original integral.

\subsection{The center map}

Define the Larmor-center map
\begin{equation}
\mathcal{L}(x,v)=x+rJv.
\label{eq:center-map}
\end{equation}
Since \(\Phi\) is constant on every oriented Larmor circle, there is a well-defined function \(F\) on the annulus \(\Omega_r\) of Larmor centers such that
\begin{equation}
F\circ\mathcal{L}=\Phi.
\label{eq:F-compose}
\end{equation}

We first record the center-space polynomiality input that will be used below.

\begin{theorem}[Bialy--Mironov--Shalom]\label{thm:BMS-input}
Under the strong-field hypotheses above:
\begin{enumerate}
\item \(F\in C^\infty(\Omega_r)\);
\item if \(F\) restricts on every boundary-centered circle
\[
C_s=\{y:|y-\gamma(s)|=r\}
\]
to a polynomial of degree at most \(N\), then \(F\) is a polynomial in the center coordinates of total degree at most \(2N\);
\item \(F\) is constant on each of the parallel curves \(\gamma_{+r}\) and \(\gamma_{-r}\).
\end{enumerate}
\end{theorem}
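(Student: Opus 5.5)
The statement is Theorem \ref{thm:BMS-input}, imported from \cite{BialyMironovShalom2020}. Here I sketch how the three items can be checked directly from the strong-field geometry, without relying on the citation.

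\textbf{Item (1): smoothness of \(F\).} The first step is to show that \(F\) is well defined and smooth on \(\Omega_r\).

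Under \(r<r_0(\gamma)/2\), the map \(\mathcal L\) restricted to the unit tangent bundle over \(\Omega_+\) is a submersion onto the annulus \(\Omega_r\) bounded by \(\gamma_{\pm r}\). Its fibres are connected arcs of Larmor circles, so \(F=\Phi\circ\mathcal L^{-1}\) is well defined. The fibre arcs are connected because each circle of radius \(r\) centred in \(\Omega_r\) meets \(\gamma\) in at most two points; this uses the tubular hypothesis at scale \(2r\).

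For smoothness, I would pick a local section of \(\mathcal L\). A concrete choice is \(y\mapsto(y-rJv,v)\) with \(v\) fixed near a chosen direction. This is smooth, so \(F=\Phi\circ(\text{section})\) is smooth. It is independent of the section, by invariance of \(\Phi\) along the arcs.

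\textbf{Item (3): \(F\) is constant on the parallel curves.} Fix \(s\). The centre \(\gamma(s)\pm rJ\dot\gamma(s)\) comes from the point \(x=\gamma(s)\) with velocity \(v=\pm\dot\gamma(s)\), a tangential velocity. Such a state is fixed by reflection.

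I would differentiate along \(s\). The derivative of \(s\mapsto F(\gamma_{\pm r}(s))\) is \(dF\cdot(1\mp r\kappa)\dot\gamma\).

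Next, reflection invariance says \(F(x+rJv)=F(x+rJv')\), where \(v'\) is the reflection of \(v\) at \(x\). Write \(v=e^{\ii(\tau+\theta)}\), so that \(v'=e^{\ii(\tau-\theta)}\). Differentiating in \(\theta\) at \(\theta=0\) gives the tangential derivative of \(F\) along the circle \(C_s\) at \(\gamma_{\pm r}(s)\). Differentiating once more, and combining with the \(s\)-derivative, I expect to show that \(dF\cdot\dot\gamma=0\) at \(\gamma_{\pm r}(s)\).

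A simpler route would give this directly. At \(\theta=0\) the point of \(C_s\) is \(\gamma_{\pm r}(s)\), whose circle tangent is parallel to \(\dot\gamma\). By the symmetry \(\theta\mapsto-\theta\), the derivative of \(\theta\mapsto F(\text{centre}(\theta))\) vanishes at \(\theta=0\). I would still double-check that this symmetry argument is legitimate. Since \(1\mp r\kappa\neq0\), constancy along \(\gamma_{\pm r}\) follows.

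\textbf{Item (2): polynomiality.} This is the main obstacle. The hypothesis gives that \(F|_{C_s}\) is a trigonometric polynomial of degree at most \(N\) for every \(s\), and we must pass to a global polynomial of degree at most \(2N\).

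The plan is to show that all derivatives of \(F\) of order greater than \(2N\) vanish.
\begin{enumerate}
\item Write \(F(\gamma(s)+re^{\ii\psi})=\sum_{|k|\le N}c_k(s)e^{\ii k\psi}\) with \(c_k\) smooth.
\item The circles \(C_s\) sweep \(\Omega_r\) in two ways, as \(s\) varies. This gives local coordinates \((s,\psi)\) near any point.
\item In these coordinates, the fact that \(F\) is a trigonometric polynomial on every circle of the family means \(F\) lies in the kernel of the operator \(\prod_{|k|\le N}(\partial_\psi-\ii k)\).
\item Rewrite \(\partial_\psi\) as the rotation vector field \(J(y-\gamma(s))\cdot\nabla\) around the varying centre \(\gamma(s)\), and use that \(s\) is arbitrary.
\end{enumerate}
These steps say the rotational derivatives of \(F\) about a one-parameter family of centres satisfy a constant-coefficient ODE. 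A rotation field about \(c\) is \(J(y-c)\cdot\nabla\). Varying \(c\) along a convex curve produces an open set of centres, and hence, by taking differences, the translation fields \(J\dot\gamma\cdot\nabla\) in all directions.

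I would then expand \(F\) in homogeneous Taylor components at a point. The ODE constraint, applied for centres \(c\) in an open set, forces every homogeneous component of degree \(d>2N\) to be annihilated. The reason is that a homogeneous polynomial of degree \(d\), restricted to circles of radius \(r\) centred at arbitrary \(c\), contains the Fourier mode \(e^{\ii d\psi}\) when \(d\le N\), and in general contains modes tied to its bidegree \((p,q)\) in \(z,\bar z\).

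The count in \(z\) and \(\bar z\) will be the delicate part: each variable contributes at most degree \(N\), giving total degree \(2N\). I would make this precise with analyticity in \(s\), via a Taylor series in \(c\) and real-analyticity of \(F\) obtained from the elliptic-type ODE. Real-analyticity would then propagate polynomiality from a neighbourhood to the whole connected annulus.
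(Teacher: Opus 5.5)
Your sketches of items (1) and (3) are sound; for comparison, the paper gives no argument for this statement and simply cites Theorems 2.3, 2.4 and Proposition 2.5 of Bialy--Mironov--Shalom. In (1), the two-intersection claim holds because a critical point of \(|x-c|^2\) on \(\gamma\) with value \(\le r^2\) is the foot of a normal of length \(\le r<r_0(\gamma)\), and such a foot is unique. In (3), your ``simpler route'' is legitimate, and you can drop the other one. Reflection invariance makes \(\theta\mapsto\Phi(\gamma(s),e^{\ii(\tau+\theta)})\) even near \(\theta=0\), since for either sign of \(\theta\) one of \(\pm\theta\) is incoming. Its derivative at \(0\) is \(-r\,dF(\gamma_{+r}(s))\cdot\dot\gamma(s)\). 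Hence \(\frac{d}{ds}F(\gamma_{+r}(s))=(1-r\kappa)\,dF\cdot\dot\gamma=0\), and you do not even need \(1-r\kappa\neq0\); \(\gamma_{-r}\) is handled the same way.

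Item (2), however, has a genuine gap. For each centre \(c=\gamma(s)\), the hypothesis gives \(\prod_{|k|\le N}(R_c-\ii k)F=0\) only \emph{on the single circle} \(C_s\). The admissible centres form the curve \(\gamma\), not an open set, and at most two such circles pass through each point of \(\Omega_r\). So your claim that ``varying \(c\) produces an open set of centres, and by taking differences the translation fields in all directions'' is false. There is also no PDE on an open set, for a fixed centre, that you could feed into a Taylor expansion.

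Even granting one, Taylor components at a point do not control Fourier modes on circles of fixed finite radius \(r\), so the homogeneous-component count targets the wrong object. The fallback to real-analyticity is also unavailable. The coefficients involve \(\gamma\), which is only \(C^\infty\) at this stage, and in the paper analyticity of \(\gamma\) is derived downstream of this very polynomiality, so that route is circular.

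A mechanism that does work is to differentiate \(F(z(s)+r\zeta,\overline{z(s)}+r\zeta^{-1})=\sum_{|k|\le N}c_k(s)\zeta^k\) both along the circle and in \(s\), and solve for the Wirtinger derivatives. With \(L_1=\sum kc_k\zeta^k\) and \(L_2=\sum c_k'\zeta^k\), this gives on \(C_s\):
\[
r(\zeta^2+t^2)F_Z=\zeta L_1+rtL_2,
\qquad
r(\zeta^2+t^2)F_W=rt\zeta^2L_2-t^2\zeta L_1 .
\]
The factor \(\zeta^2+t^2\) vanishes only at \(\zeta=\pm\ii t\), that is, at \(\gamma_{\pm r}(s)\). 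There \(F_Z\) and \(F_W\) are bounded, because \(F\) is smooth up to \(\partial\Omega_r\). Hence both right-hand Laurent polynomials are divisible by \(\zeta^2+t^2\). It follows that \(F_Z|_{C_s}\) has modes in \([-N,N-1]\) and \(F_W|_{C_s}\) has modes in \([-N+1,N]\).

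Iterating, \(\partial_Z^a\partial_W^bF|_{C_s}\) has modes in \([-N+b,N-a]\), which is empty once \(a+b\ge 2N+1\). Since the circles \(C_s\) cover \(\Omega_r\), every derivative of order \(2N+1\) vanishes. Therefore \(F\) is a polynomial of degree \(\le 2N\) on the connected annulus, with no analyticity needed.
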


\begin{proof}
These are Theorems 2.3 and 2.4 and Proposition 2.5 of \cite{BialyMironovShalom2020}.
\end{proof}

We now verify explicitly that Theorem \ref{thm:BMS-input} applies with the \emph{minimal true Fourier degree} \(N=m\), rather than merely with the nominal velocity degree of the original polynomial expression. Fix \(s\), write
\[
\gamma(s)=(x_s,y_s),
\]
and parametrize the corresponding boundary-centered Larmor circle by
\[
X-x_s=r\cos\theta,
\qquad
Y-y_s=r\sin\theta,
\qquad
\zeta=e^{\ii\theta}.
\]
On this circle the first integral has the form
\begin{equation}
F|_{C_s}
=
\sum_{k=-m}^{m}c_k(s)\zeta^k.
\label{eq:circle-fourier-ms}
\end{equation}
For \(k\geq1\),
\[
\zeta^k
=
r^{-k}\bigl((X-x_s)+\ii(Y-y_s)\bigr)^k,
\qquad
\zeta^{-k}
=
r^{-k}\bigl((X-x_s)-\ii(Y-y_s)\bigr)^k.
\]
Therefore the Cartesian polynomial
\begin{align}
R_s(X,Y)
={}&c_0(s)
+
\sum_{k=1}^{m}r^{-k}c_k(s)
\bigl((X-x_s)+\ii(Y-y_s)\bigr)^k
\notag\\
&+
\sum_{k=1}^{m}r^{-k}c_{-k}(s)
\bigl((X-x_s)-\ii(Y-y_s)\bigr)^k
\label{eq:explicit-circle-polynomial}
\end{align}
has total degree at most \(m\) and agrees with \(F\) on \(C_s\). The polynomial \(R_s\) may depend on \(s\), exactly as allowed in the hypothesis of the center-polynomial theorem. Consequently Theorem \ref{thm:BMS-input} applies with \(N=m\), and hence
\begin{equation}
\deg F\leq 2m.
\label{eq:F-totaldegree}
\end{equation}

This explicit construction is important: no degree-inflating replacement of the original integral is used, and the bound is tied to the minimal true Fourier degree itself.

\begin{remark}[Exact external dependency]
The published strong-field input is precisely isolated as follows. Theorems 2.3 and 2.4 and Proposition 2.5 of \cite{BialyMironovShalom2020}, summarized in Theorem \ref{thm:BMS-input}, provide the smooth center-space function \(F\), its polynomiality, and its constancy on the two parallel curves. Theorem 1.1 of the same paper provides the nonsingular algebraic-oval property of \(\gamma_{\pm r}\); the strong-field algebraic framework also supplies the passage from algebraic parallel curves to algebraicity of the original boundary. These facts are used only in the regularity and algebraic bridge of Section \ref{sec:bridge}. The sharp bidegree reduction, top-mode half-index identity, normalization continuation of the reflection laws, and valuation obstruction are proved in the present paper. For self-containment, Appendix \ref{app:offset-bridge} gives an independent proof of the fixed-offset algebraicity implication.
\end{remark}

\subsection{Complex coordinates}

Introduce independent complex variables
\[
Z=X+\ii Y,
\qquad
W=X-\ii Y.
\]
The real plane is the antiholomorphic diagonal
\[
W=\overline Z.
\]
Complexify \(F\) as
\begin{equation}
F(Z,W)=\sum_{a+b\leq2m}c_{ab}Z^aW^b.
\label{eq:F-complex}
\end{equation}
Since \(F\) is real on the real center plane,
\begin{equation}
c_{ab}=\overline{c_{ba}}.
\label{eq:Hermitian}
\end{equation}

The total-degree bound \eqref{eq:F-totaldegree} is not yet sharp enough for the top-mode argument. The next proposition is.

\begin{proposition}[Sharp bidegree reduction]\label{prop:bidegree}
The center polynomial has bidegree at most \((m,m)\):
\begin{equation}
\boxed{
F(Z,W)=\sum_{a=0}^{m}\sum_{b=0}^{m}c_{ab}Z^aW^b.
}
\label{eq:bidegree}
\end{equation}
\end{proposition}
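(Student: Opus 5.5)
The plan is to restrict the complexified center polynomial \eqref{eq:F-complex} to Larmor circles centered at points of \(\Omega_+\) and read off the extreme Laurent modes in \(\zeta=e^{\ii\theta}\). Fix a point \(x\in\Omega_+\), write \(\xi=x_1+\ii x_2\), and let the unit velocity \(v=(\cos\varphi,\sin\varphi)\) range over the whole circle. By \eqref{eq:center-map} the Larmor center is \(\xi+r\ii e^{\ii\varphi}\). Setting \(\zeta=\ii e^{\ii\varphi}\), the center has complex coordinates
\[
Z=\xi+r\zeta,\qquad W=\bar\xi+r\zeta^{-1}.
\]
By \eqref{eq:F-compose} and \eqref{eq:true-fourier}, the function \(\zeta\mapsto F(\xi+r\zeta,\bar\xi+r\zeta^{-1})\) equals \(\Phi(x,\varphi)\). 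This is a Laurent polynomial in \(\zeta\) with exponents in \([-m,m]\). Since \(F\) is a polynomial (Theorem \ref{thm:BMS-input} together with \eqref{eq:F-totaldegree}), the substituted expression is an honest Laurent polynomial. Its coefficients are therefore uniquely determined by its values on the unit circle \(|\zeta|=1\), so every coefficient of \(\zeta^k\) with \(|k|>m\) must vanish identically.

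Next, group \(F\) by powers of \(Z\): \(F=\sum_{a=0}^{A}Z^aP_a(W)\), where \(A\) is the exact \(Z\)-degree and \(P_A\not\equiv0\). Suppose \(A>m\). Expand \((\xi+r\zeta)^a(\bar\xi+r\zeta^{-1})^b\) binomially. A term \(Z^aW^b\) contributes only to the exponents \(\zeta^{i-j}\) with \(0\le i\le a\) and \(0\le j\le b\). Hence the highest exponent that occurs is \(\zeta^A\), and it arises exactly from \(i=a=A\), \(j=0\). Its coefficient is
\[
r^A\sum_b c_{Ab}\,\bar\xi^{\,b}=r^AP_A(\bar\xi).
\]
Since \(A>m\), the first paragraph forces \(P_A(\bar\xi)=0\) for every \(\xi\) in the open set \(\Omega_+\). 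A one-variable polynomial vanishing on an open subset of \(\mathbb C\) is identically zero. This contradicts \(P_A\not\equiv0\), so \(\deg_Z F\le m\).

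The bound \(\deg_W F\le m\) follows either from the same argument applied to the lowest mode \(\zeta^{-B}\), whose coefficient is \(r^B\sum_a c_{aB}\xi^a\), or directly from the Hermitian symmetry \eqref{eq:Hermitian}. Together these give \eqref{eq:bidegree}.

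The only point needing care, which I expect to be the main (mild) obstacle, is justifying that the full circle of velocities is available at each base point. Definition \ref{def:polynomial-integral} places \(\Phi\) on the whole unit tangent bundle of the open set \(\Omega_+\), and \(F\circ\mathcal L=\Phi\) holds there. This is exactly why I would work at interior points \(x\in\Omega_+\) rather than at boundary points, where only inward velocities are physical. Even at boundary points the argument would survive: a Laurent polynomial that agrees on an arc with one of Fourier degree \(\le m\) agrees identically, and the boundary image \(\{\bar\xi:\xi\in\gamma\}\) is infinite. So \(P_A\) would still be forced to vanish.
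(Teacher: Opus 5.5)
Your argument is correct and is essentially the paper's proof: the paper also reads off the $\zeta^A$ coefficient $r^A\sum_b c_{Ab}w^b$ of the Laurent restriction and then repeats the argument for the negative modes. The only difference is that the paper works on the boundary-centered circles $C_s$, so $P_A$ is killed because it vanishes on the infinite set $\{\overline{z(s)}\}$, whereas you use interior base points.

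One correction to your last paragraph: the boundary is actually the cleaner choice. Every velocity at $z(s)$ gives a Larmor center on $C_s\subset\Omega_r$. At an interior point of $\Omega_+$, by contrast, some Larmor circles never reach $\gamma$, and their centers lie outside $\Omega_r$. So at interior points the identity $F\circ\mathcal L=\Phi$ on the full velocity circle does not follow from $\Phi$ being defined on the whole unit tangent bundle. It follows from your arc remark: $F\circ\mathcal L=\Phi$ holds on the open arc of velocities whose circles do meet $\gamma$, and two Laurent polynomials agreeing on that arc agree everywhere.
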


\begin{proof}
Fix a boundary point
\[
z=z(s),\qquad w=\overline{z(s)}.
\]
On the corresponding boundary-centered Larmor circle write
\begin{equation}
Z=z+r\zeta,
\qquad
W=w+r\zeta^{-1},
\qquad
|\zeta|=1.
\label{eq:circle-param}
\end{equation}
Then
\[
F(z+r\zeta,w+r\zeta^{-1})
\]
has Laurent degree at most \(m\), because it equals the restriction of the original first integral to unit velocities at the boundary point \(z(s)\).

Let
\[
A=\max\{a:c_{ab}\neq0\text{ for some }b\}.
\]
Suppose \(A>m\). By maximality of \(A\), the coefficient of \(\zeta^A\) in
\[
F(z+r\zeta,w+r\zeta^{-1})
\]
is exactly
\[
r^A\sum_b c_{Ab}w^b.
\]
Since no harmonic \(\zeta^A\) can occur, we have
\[
\sum_b c_{Ab}w(s)^b=0
\]
for every \(s\). The left-hand side is a one-variable polynomial in \(w\) vanishing on the infinite set \(\{\overline{z(s)}\}\), hence it vanishes identically. This contradicts the definition of \(A\). Therefore \(a\leq m\) for every nonzero coefficient.

The identical argument applied to negative Laurent powers gives \(b\leq m\). Thus \eqref{eq:bidegree} holds.
\end{proof}

Write
\begin{equation}
F(Z,W)=\sum_{j=0}^{m}Z^jP_j(W),
\qquad
P_j(W)=\sum_{b=0}^{m}c_{jb}W^b.
\label{eq:row-polynomials}
\end{equation}
Define the algebraic conjugates
\begin{equation}
Q_j(Z)=\overline{P_j(\overline Z)}.
\label{eq:Qj}
\end{equation}
By \eqref{eq:Hermitian}, these are precisely the corresponding column polynomials.

\begin{lemma}[Nonvanishing of the top polynomial]\label{lem:top-nonzero}
If \(m\) is the minimal true Fourier degree, then
\[
P_m\not\equiv0,
\qquad
Q_m\not\equiv0.
\]
\end{lemma}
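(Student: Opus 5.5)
The plan is to argue by contradiction through the Laurent expansion on boundary-centered circles, the same device used in the proof of Proposition \ref{prop:bidegree}.

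Suppose \(P_m\equiv0\). Then \(F(Z,W)=\sum_{j=0}^{m-1}Z^jP_j(W)\), so \(F\) has \(Z\)-degree at most \(m-1\). On the circle \eqref{eq:circle-param} centered at a boundary point \(z=z(s)\), the positive Laurent powers \(\zeta^k\) in \(F(z+r\zeta,w+r\zeta^{-1})\) arise only from the factor \((z+r\zeta)^j\). Negative powers from \(W\) cannot raise them. Hence the coefficient of \(\zeta^m\) vanishes identically in \(s\).

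This restriction equals the unit-shell restriction of \(\Phi\) at the boundary point, after the fixed reparametrization relating \(\zeta\) to \(e^{\ii\varphi}\). Via \(\mathcal L\), \(\zeta\) is a unimodular constant times \(e^{\ii\varphi}\), so harmonics correspond degree by degree. Thus \(a_m(z(s))=0\) along \(\gamma\). By the reality condition \(a_{-m}=\overline{a_m}\) in \eqref{eq:true-fourier}, the \(\zeta^{-m}\) coefficient also vanishes there. Equivalently, one can argue symmetrically that \(Q_m\equiv0\) iff \(P_m\equiv0\), since \(Q_m(Z)=\overline{P_m(\overline Z)}\) by \eqref{eq:Qj}. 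So the two claims are equivalent and it suffices to treat \(P_m\).

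To contradict minimality of \(m\), I need vanishing of the top harmonic on all of \(\Omega_+\), not just on \(\gamma\). Here I use that \(F\) is a globally defined polynomial with \(F\circ\mathcal L=\Phi\) by \eqref{eq:F-compose}. For any \(x\in\Omega_+\) and \(v=e^{\ii\varphi}\), the center is \(Z=x+r\ii e^{\ii\varphi}\) and \(W=\bar x-r\ii e^{-\ii\varphi}\). Substituting into \(F\), which has \(Z\)-degree \(\le m-1\) and \(W\)-degree \(\le m\), the expansion in \(e^{\ii\varphi}\) has positive powers at most \(m-1\). By Hermitian symmetry \eqref{eq:Hermitian}, \(F\) also has \(W\)-degree \(\le m-1\), since the column polynomial \(Q_m\) is then zero. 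Hence negative powers are at most \(m-1\) as well.

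Therefore \(\Phi(x,\varphi)\) has Fourier degree \(\le m-1\) at every point of the phase region, contradicting the minimality of the true Fourier degree \(m\). When \(m=0\) the statement is vacuous or trivial, because then \(F\) is a nonzero constant whenever \(\Phi\) is nonzero; Lemma \ref{lem:m-zero} handles that case separately.

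The main point to get right is the middle step: the global identity \(F\circ\mathcal L=\Phi\) holds on the whole swept region, so vanishing of the top row forces vanishing of the top Fourier mode everywhere, not only on boundary circles. The symmetry \(c_{ab}=\overline{c_{ba}}\) is needed to kill both extreme harmonics simultaneously.
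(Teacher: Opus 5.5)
Your proposal is correct and follows the paper's own proof. Hermitian symmetry turns \(P_m\equiv0\) into \(Q_m\equiv0\), which gives bidegree at most \((m-1,m-1)\); then \(F\circ\mathcal L=\Phi\) shows every unit-shell restriction has Fourier degree at most \(m-1\), contradicting minimality. Your boundary-circle paragraph is redundant given the global step, but it is harmless.
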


\begin{proof}
If \(P_m\equiv0\), then \eqref{eq:Hermitian} implies \(Q_m\equiv0\), so the \(m\)-th row and \(m\)-th column vanish. Proposition \ref{prop:bidegree} then gives bidegree at most \((m-1,m-1)\), and the restriction to every Larmor circle has Fourier degree at most \(m-1\), contradicting minimality.
\end{proof}

\section{Reflection symmetry and the two leading modes}
\label{sec:reflection}

The center polynomial still contains much more information than is needed.  Reflection acts on the circle parameter by a simple inversion involving the squared boundary tangent.  Comparing Laurent modes converts the billiard reflection law into algebraic identities.  Only the highest two modes will be needed for rigidity, and this section computes them explicitly.

Fix a boundary point \(z\in\gamma\), put \(w=\overline z\), and expand
\begin{equation}
F(z+r\zeta,w+r\zeta^{-1})
=
\sum_{k=-m}^{m}C_k(z,w)\zeta^k.
\label{eq:Laurent-expansion}
\end{equation}
For \(k>0\), write
\begin{equation}
C_k=r^kA_k(z,w),
\qquad
C_{-k}=r^kB_k(z,w).
\label{eq:AkBk}
\end{equation}
On the real boundary,
\[
B_k(z,\overline z)=\overline{A_k(z,\overline z)}.
\]

\subsection{The top two coefficients}

\begin{lemma}[Top two positive modes]\label{lem:top-two}
The highest two positive coefficients are
\begin{align}
A_m&=P_m(w),
\label{eq:Am}\\
A_{m-1}
&=
P_{m-1}(w)+mzP_m(w)+r^2P_m'(w).
\label{eq:Am1}
\end{align}
Similarly,
\begin{align}
B_m&=Q_m(z),
\label{eq:Bm}\\
B_{m-1}
&=
Q_{m-1}(z)+mwQ_m(z)+r^2Q_m'(z).
\label{eq:Bm1}
\end{align}
\end{lemma}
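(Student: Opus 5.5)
The plan is a direct bookkeeping of Laurent exponents in the expansion \eqref{eq:Laurent-expansion}, using the bidegree bound of Proposition \ref{prop:bidegree}. Substitute \eqref{eq:circle-param} into \eqref{eq:bidegree} and expand each monomial by the binomial theorem:
\[
(z+r\zeta)^a(w+r\zeta^{-1})^b=\sum_{i=0}^{a}\sum_{j=0}^{b}\binom{a}{i}\binom{b}{j}z^{a-i}w^{b-j}r^{i+j}\zeta^{i-j}.
\]
A term contributes to \(\zeta^k\) exactly when \(i-j=k\). Because \(0\le i\le a\le m\) and \(j\ge0\), the admissible index pairs for \(k=m\) and \(k=m-1\) form a very short list. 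I will enumerate that list rather than compute the full expansion.

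\textbf{Mode \(k=m\).} Here \(i-j=m\) forces \(i=m\), and hence \(a=m\) and \(j=0\). The coefficient of \(\zeta^m\) is therefore \(r^m\sum_b c_{mb}w^b=r^mP_m(w)\). Comparing with \eqref{eq:AkBk} gives \eqref{eq:Am}.

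\textbf{Mode \(k=m-1\).} There are three cases.
\begin{enumerate}
\item \(i=m-1\), \(j=0\), \(a=m-1\). This gives \(r^{m-1}P_{m-1}(w)\).
\item \(i=m-1\), \(j=0\), \(a=m\). The factor \(\binom{m}{m-1}z\) gives \(r^{m-1}\,m z P_m(w)\).
\item \(i=m\), \(j=1\), \(a=m\). The factor \(\binom{b}{1}w^{b-1}\) gives \(r^{m+1}\sum_b b\,c_{mb}w^{b-1}=r^{m+1}P_m'(w)\).
\end{enumerate}
No other pairs are possible, because \(i\le m\) and any \(j\ge2\) would require \(i\ge m+1\). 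Dividing by \(r^{m-1}\) yields \eqref{eq:Am1}.

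\textbf{Negative modes.} The same enumeration applies with the roles of \((Z,a,i)\) and \((W,b,j)\) exchanged. This produces the column polynomials \(\sum_a c_{am}z^a\) and \(\sum_a c_{a,m-1}z^a\). By \eqref{eq:Hermitian} and \eqref{eq:Qj}, \(\sum_a c_{aj}Z^a=\sum_a\overline{c_{ja}}Z^a=Q_j(Z)\), which gives \eqref{eq:Bm} and \eqref{eq:Bm1}. These identities can be read either as polynomial identities in independent \((z,w)\), or on the real boundary, where they are consistent with \(B_k=\overline{A_k}\).

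The only point needing care, rather than a real obstacle, is this exhaustiveness check for the admissible \((i,j,a)\). It is exactly where the sharp bidegree bound \(a,b\le m\) from Proposition \ref{prop:bidegree} is used. With only the total-degree bound \eqref{eq:F-totaldegree}, higher rows would contaminate both modes.
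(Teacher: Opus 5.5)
Your proposal is correct and is essentially the paper's argument: the same exponent count, with only three contributions to $\zeta^{m-1}$ and the bidegree bound ruling out anything else. The paper groups terms by row, writing the expansion as $\sum_j (z+r\zeta)^j P_j(w+r\zeta^{-1})$ and obtaining the negative modes "by algebraic conjugation," whereas you enumerate binomial indices and verify the column identification with $Q_j$ explicitly through \eqref{eq:Hermitian}, which is slightly more detailed but the same proof.
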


\begin{proof}
Using \eqref{eq:row-polynomials} and \eqref{eq:circle-param},
\[
F(z+r\zeta,w+r\zeta^{-1})
=
\sum_{j=0}^{m}(z+r\zeta)^jP_j(w+r\zeta^{-1}).
\]

For \(\zeta^m\), the only possibility is to take all \(m\) positive powers from \((z+r\zeta)^m\) and no negative power from \(P_m\), giving
\[
C_m=r^mP_m(w).
\]

For \(\zeta^{m-1}\), exactly three contributions occur:
\[
r^{m-1}P_{m-1}(w),
\qquad
mzr^{m-1}P_m(w),
\qquad
r^{m+1}P_m'(w).
\]
After dividing by \(r^{m-1}\), this gives \eqref{eq:Am1}. The negative-mode formulas follow by algebraic conjugation.
\end{proof}

\subsection{Reflection in the Larmor-center variable}

Represent the unit velocity as a complex number \(v\) with \(|v|=1\). Define
\[
\zeta=\ii v.
\]
Then the Larmor center is
\[
Z=z+r\zeta.
\]

Let \(t\) be the unit tangent at the reflection point. Specular reflection across the tangent line sends
\[
v_-\longmapsto v_+=t^2\overline{v_-}.
\]
Since \(\zeta_-=\ii v_-\), we have
\[
\overline{v_-}=\frac{\ii}{\zeta_-},
\]
and therefore
\begin{equation}
\boxed{
\zeta_+
=
-\frac{t^2}{\zeta_-}.
}
\label{eq:zeta-reflection}
\end{equation}

Set
\begin{equation}
u=t^2.
\label{eq:u-def}
\end{equation}
Reflection invariance gives
\[
F(z+r\zeta,w+r\zeta^{-1})
=
F\left(z-r\frac{u}{\zeta},
w-r\frac{\zeta}{u}\right).
\]
Comparing Laurent coefficients yields the universal reflection law.

\begin{proposition}[Fourier reflection identities]\label{prop:reflection-identities}
For every \(1\leq k\leq m\),
\begin{equation}
\boxed{
A_k u^k=(-1)^kB_k.
}
\label{eq:reflection-mode}
\end{equation}
In particular,
\begin{equation}
\boxed{
P_m(w)u^m=(-1)^mQ_m(z)
}
\label{eq:top-reflection}
\end{equation}
and
\begin{equation}
\boxed{
\begin{aligned}
&\bigl(P_{m-1}(w)+mzP_m(w)+r^2P_m'(w)\bigr)u^{m-1}\\
&\qquad=
(-1)^{m-1}
\bigl(Q_{m-1}(z)+mwQ_m(z)+r^2Q_m'(z)\bigr).
\end{aligned}
}
\label{eq:next-reflection}
\end{equation}
\end{proposition}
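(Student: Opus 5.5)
We expand both sides of the reflection-invariance identity
\[
F(z+r\zeta,w+r\zeta^{-1})=F\Bigl(z-r\frac{u}{\zeta},\,w-r\frac{\zeta}{u}\Bigr)
\]
as Laurent polynomials in \(\zeta\) and compare coefficients. First we justify that this identity holds for all \(\zeta\in\mathbb C^\ast\), not only on the unit circle. At a fixed boundary point \(z\), every incoming velocity \(v_-\) pointing out of \(\Omega\) reflects to an inward \(v_+\). The integral \(\Phi\) is invariant under reflection, and \(\Phi=F\circ\mathcal L\) by \eqref{eq:F-compose}. Hence the two sides agree for \(\zeta=\ii v_-\) in an open arc of the unit circle, where \(\zeta^{-1}=\overline\zeta\) and \(w=\overline z\). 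Both sides are Laurent polynomials in \(\zeta\) (with \(u\) fixed and \(|u|=1\)), so agreement on an arc forces identical coefficients.

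Second, we rewrite the right-hand side as a Laurent expansion of the same type. Substitute \(\eta=-u/\zeta\). Then \(-\zeta/u=\eta^{-1}\), so the right-hand side becomes \(F(z+r\eta,w+r\eta^{-1})=\sum_k C_k(z,w)\eta^k\). Here the \(C_k\) are exactly the coefficients of \eqref{eq:Laurent-expansion}, since that expansion is a polynomial identity in \(\zeta\) for fixed \((z,w)\). Plugging in \(\eta^k=(-1)^k u^k\zeta^{-k}\) gives
\[
\sum_{k}C_k\zeta^k=\sum_k C_k(-1)^k u^k\zeta^{-k}.
\]
Comparing the coefficient of \(\zeta^{-k}\) for \(1\le k\le m\) yields \(C_{-k}=(-1)^k u^k C_k\). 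By \eqref{eq:AkBk} this reads \(r^kB_k=(-1)^k u^k r^kA_k\), that is, \(A_ku^k=(-1)^kB_k\), using \((-1)^{-k}=(-1)^k\). The positive-mode comparison gives the conjugate relation \(C_k=(-1)^ku^{-k}C_{-k}\), which is the same identity. The mode \(k=0\) is trivial.

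Third, the two displayed special cases follow by substituting Lemma~\ref{lem:top-two}. For \(k=m\), we use \(A_m=P_m(w)\) and \(B_m=Q_m(z)\), which gives \eqref{eq:top-reflection}. For \(k=m-1\), we insert \eqref{eq:Am1} and \eqref{eq:Bm1}, which gives \eqref{eq:next-reflection}.

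The only delicate point is the first step. We must check that reflection at \(z\) really relates \(\Phi\) at the two velocities with the \emph{same} center-space function \(F\). That is, both Larmor centers \(z+r\zeta_\pm\) must lie in \(\Omega_r\), where \(F\circ\mathcal L=\Phi\) holds; the strong-field hypothesis guarantees this. We must also check that the available set of \(\zeta_-\) has nonempty interior in the circle, so that coefficient comparison is legitimate. Once that is settled, the rest is bookkeeping of signs under \(\zeta\mapsto -u/\zeta\).
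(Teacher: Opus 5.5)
Your proposal is correct and follows the paper's proof. Both arguments extend the reflection identity from an arc of the unit circle to a Laurent-coefficient comparison, substitute \(\zeta\mapsto -u/\zeta\), read off \(C_k u^k=(-1)^kC_{-k}\), and specialize via Lemma~\ref{lem:top-two}; your reparametrization by \(\eta=-u/\zeta\) and comparison of \(\zeta^{-k}\) rather than \(\zeta^{k}\) coefficients is the same bookkeeping, and the domain check you flag (that the Larmor center of the arriving arc lies in \(\Omega_r\)) is a point the paper leaves implicit.
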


\begin{proof}
For a fixed boundary point, physical reflection invariance gives equality of the two Laurent expressions for the open arc of \(|\zeta|=1\) corresponding to incoming unit velocities. Both sides are Laurent polynomials in \(\zeta\), hence holomorphic on \(\mathbb{C}^{\times}\). Since they agree on an arc with an accumulation point in \(\mathbb{C}^{\times}\), the identity theorem extends the equality to all \(\zeta\in\mathbb{C}^{\times}\). We may therefore compare Laurent coefficients.

Substitute \(\zeta\mapsto-u/\zeta\) in the Laurent polynomial \eqref{eq:Laurent-expansion}. The coefficient of \(\zeta^k\) on the reflected side is
\[
(-1)^ku^{-k}C_{-k}.
\]
Equality with \(C_k\) gives
\[
C_ku^k=(-1)^kC_{-k}.
\]
Using \eqref{eq:AkBk} proves \eqref{eq:reflection-mode}; \eqref{eq:top-reflection} and \eqref{eq:next-reflection} follow from Lemma \ref{lem:top-two}.
\end{proof}

\begin{remark}[Division of labor between the two leading modes]\label{rem:mode-roles}
The two identities play different roles later.  The top identity \eqref{eq:top-reflection} is used twice: first on the physical oval to count and locate the roots of \(Q_m\), and later at a one-sided pole to determine their common multiplicity.  The next-to-top identity \eqref{eq:next-reflection} has one indispensable task: together with the top identity, it rules out a branch at infinity on which both \(z\) and \(w\) have poles.  The top mode alone gives a consistent leading-order relation at such a branch and therefore cannot exclude it.
\end{remark}

\section{From the real boundary to a compact algebraic curve}
\label{sec:bridge}

The reflection identities obtained so far live only on the physical real boundary.  To compare their behavior at infinity, we need a compact complex curve on which the coordinate functions and tangent data are meromorphic.  This section supplies that bridge: integrability first forces analyticity and algebraicity of the boundary, after which the identities extend to the normalization of its irreducible complexification.

The theorem is stated for a smooth boundary.  Before passing to a complex algebraic normalization, we first show that polynomial integrability automatically improves this regularity: the original boundary is real analytic.  We then use the published strong-field algebraicity bridge to obtain an algebraic boundary and move immediately to the irreducible complexification.  A self-contained incidence-variety proof of the fixed-offset algebraicity step is deferred to Appendix \ref{app:offset-bridge}, so the main line of the rigidity argument remains visible.

\subsection{Automatic analyticity from a nonsingular algebraic parallel curve}

The following consequence of the strong-field algebraic theory is the key regularity input.

\begin{proposition}[Automatic analyticity]\label{prop:auto-analyticity}
Under the hypotheses of Theorem \ref{thm:main}, the smooth boundary \(\gamma\) is real analytic.
\end{proposition}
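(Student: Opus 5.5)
The plan is to transfer analyticity from the parallel curve \(\gamma_{+r}\) back to \(\gamma\) along its normal. By the external input recorded in the remark on exact dependencies (Theorem 1.1 of \cite{BialyMironovShalom2020}), polynomial integrability in the strong-field regime makes \(\gamma_{+r}\) (and \(\gamma_{-r}\)) a nonsingular real algebraic oval. More concretely, Theorem \ref{thm:BMS-input}(3) says \(F\) is constant, say \(F\equiv c_+\), on \(\gamma_{+r}\). We also have \(F\) nonconstant, since otherwise \(\Phi=F\circ\mathcal L\) would be constant. Nonsingularity means \(\nabla F\neq 0\) along \(\gamma_{+r}\), possibly after passing to the reduced defining polynomial of the component. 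So \(\gamma_{+r}\) is a compact smooth component of the zero set of the polynomial \(F-c_+\) with nonvanishing gradient. The real-analytic implicit function theorem then shows that \(\gamma_{+r}\) is a real-analytic embedded closed curve.

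Next I recover \(\gamma\) from \(\gamma_{+r}\). Since \(r<r_0(\gamma)/2\), the curve \(\gamma_{+r}=\gamma+rJ\dot\gamma\) is smooth and embedded. Its tangent at the parameter \(s\) is \((1-r\kappa(s))\dot\gamma(s)\), where the sign convention is fixed by the orientation. The factor \(1-r\kappa\) is nonzero because \(r\) is below the focal distance \(r_0\ge 1/\max\kappa\). Hence \(\gamma_{+r}\) has the same unit normal as \(\gamma\) at corresponding points, and \(\gamma\) is the parallel curve of \(\gamma_{+r}\) at the signed distance \(-r\):
\[
\gamma(s)=\gamma_{+r}(s)-rJ\,T_{+r}(s),
\]
where \(T_{+r}\) is the unit tangent of \(\gamma_{+r}\).

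Now parametrize \(\gamma_{+r}\) real-analytically by some \(\sigma\), with nonvanishing velocity by nonsingularity. Then \(T_{+r}=\partial_\sigma\gamma_{+r}/|\partial_\sigma\gamma_{+r}|\) is real analytic in \(\sigma\), since the denominator never vanishes. So \(\sigma\mapsto \gamma_{+r}(\sigma)-rJT_{+r}(\sigma)\) is a real-analytic map whose image is \(\gamma\). To conclude that \(\gamma\) is an analytic curve and not merely an analytic image, I need this map to be an immersion. Its derivative is \((1-r\kappa_{+r}^{-})T_{+r}\) up to the speed factor, where the factor is computed with the curvature \(\kappa_{+r}\) of \(\gamma_{+r}\). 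Expressed in terms of \(\gamma\), this factor equals \(1/(1-r\kappa)\) times the speed, which is finite and nonzero. The identification of \(s\) with a function of \(\sigma\) is then an analytic diffeomorphism, and \(\gamma\) is a real-analytic embedded curve.

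The main obstacle is the first step: making sure the cited result really gives nonsingularity, meaning a polynomial with nonvanishing gradient on the oval, and not just algebraicity of \(\gamma_{+r}\). Plain algebraicity leaves open isolated singular points of the real locus lying on the oval, where analyticity of the arc could fail. If the citation only gives algebraicity, I would argue directly instead. \(\gamma_{+r}\) is already known to be a \(C^\infty\) embedded curve inside the zero set of a nonzero polynomial \(G\), namely a factor of \(F-c_+\). Near any point, a real algebraic set that is a \(C^\infty\) arc coincides with a real-analytic branch: use Puiseux parametrization of the real branches of \(\{G=0\}\) through the point, and note that a smooth arc must consist of exactly two half-branches that match with analytic continuation. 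This gives analyticity of \(\gamma_{+r}\), and the remaining steps go through unchanged.
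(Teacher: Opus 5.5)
Your proposal is correct and follows the paper's proof essentially step for step. Bialy--Mironov--Shalom's Theorem 1.1 makes $\gamma_{+r}$ a nonsingular, hence real-analytic, oval; the factor $1-r\kappa$ makes tangents and normals agree; and the inverse offset $\gamma=\gamma_{+r}-rJT_{+r}$ is applied to an analytic parametrization. Your immersion check, with derivative factor $1/(1-r\kappa)$, usefully spells out a point the paper leaves implicit, while the Puiseux fallback is unnecessary because the cited theorem does supply nonsingularity.

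One slip to fix: the focal bound runs the other way, $r_0\le 1/\max\kappa$. It is this upper bound that yields $r\kappa<1$, i.e.\ $1-r\kappa>0$. You need positivity, not mere nonvanishing, for the unit tangents to agree in orientation; the paper gets it by continuity in $\rho\in[0,r]$ from the value $1$ at $\rho=0$.
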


\begin{proof}
By Theorem 1.1 of \cite{BialyMironovShalom2020}, polynomial integrability in the strong-field regime implies that each parallel curve \(\gamma_{\pm r}\) is a real oval of a nonsingular affine algebraic curve.  In particular,
\[
\Gamma:=\gamma_{+r}
\]
is a regular real-analytic embedded closed curve.

Let \(s\) be arclength on the original smooth boundary, let
\[
t(s)=\gamma'(s),
\qquad
n(s)=Jt(s),
\]
and use the convention
\[
\Gamma(s)=\gamma(s)+r n(s).
\]
With signed curvature \(\kappa\) defined by
\[
t'=\kappa n,
\qquad
n'=-\kappa t,
\]
one has
\begin{equation}
\Gamma'(s)=\bigl(1-r\kappa(s)\bigr)t(s).
\label{eq:parallel-tangent-factor}
\end{equation}
Because \(r<r_0(\gamma)\), the normal map
\[
E:\gamma\times(-r_0(\gamma),r_0(\gamma))\longrightarrow\mathbb R^2,
\qquad
E(s,\rho)=\gamma(s)+\rho n(s),
\]
is an embedding.  Hence its tangential differential never vanishes, so
\[
1-\rho\kappa(s)\neq0
\qquad
\text{for }0\leq\rho\leq r.
\]
Since this factor equals \(1\) at \(\rho=0\), it remains positive throughout the interval.  Therefore \eqref{eq:parallel-tangent-factor} shows that \(\Gamma\) has the same oriented unit tangent and the same chosen unit normal as \(\gamma\):
\[
t_\Gamma(\Gamma(s))=t(s),
\qquad
n_\Gamma(\Gamma(s))=n(s).
\]
Consequently the original boundary is recovered pointwise by the inverse offset formula
\begin{equation}
\boxed{
\gamma(s)=\Gamma(s)-r n_\Gamma(\Gamma(s)).
}
\label{eq:inverse-offset}
\end{equation}

Now choose any local real-analytic parameter \(\sigma\) on the nonsingular algebraic oval \(\Gamma\).  Its unit tangent and unit normal are real analytic in \(\sigma\), so
\[
\widetilde\gamma(\sigma)
:=
\Gamma(\sigma)-r n_\Gamma(\sigma)
\]
is a real-analytic parametrized curve.  By \eqref{eq:inverse-offset} and uniqueness in the tubular neighborhood, its image is exactly the original boundary \(\gamma\).  Thus \(\gamma\) is real analytic.
\end{proof}

\begin{remark}[Focused audit of the inverse-offset step]\label{rem:offset-audit}
Three points are essential in Proposition \ref{prop:auto-analyticity}.  First, nonsingularity of the algebraic parallel curve gives genuine real analyticity of the oval by the analytic implicit-function theorem.  Second, the tubular-radius hypothesis is stronger than mere regularity of the offset: it guarantees that the normal map is embedded and that the factor \(1-r\kappa\) never vanishes or changes sign, so the oriented tangent and normal fields of \(\gamma\) and \(\gamma_{+r}\) agree under the offset correspondence.  Third, the inverse formula \eqref{eq:inverse-offset} uses only the analytic position and unit-normal fields of the algebraic oval; it does not divide by curvature and remains valid at points where \(\kappa=0\).  Thus no hidden positive-curvature or global angle-lifting assumption is present in the smooth-to-analytic upgrade.
\end{remark}

\subsection{Algebraicity of the original boundary}

For the later normalization argument we need the original boundary itself to be algebraic, not merely real analytic.  At this point we use the shortest published bridge.  By Theorem 1.1 and the algebraic-offset framework of \cite{BialyMironovShalom2020}, polynomial integrability in the strong-field regime makes the parallel curves \(\gamma_{\pm r}\) nonsingular algebraic ovals, and algebraicity of a regular fixed-distance parallel curve implies algebraicity of the original boundary.  For classical algebraic treatments of plane and hypersurface offsets, see \cite{FaroukiNeff1990,SendraSendra2000}.  Thus we have the following.

\begin{proposition}[Algebraicity of \(\gamma\)]\label{prop:gamma-algebraic}
Under the hypotheses of Theorem \ref{thm:main}, the original boundary \(\gamma\) is contained in a real affine algebraic curve.
\end{proposition}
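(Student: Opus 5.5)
The proof deduces algebraicity of \(\gamma\) from the algebraicity of the parallel curve \(\Gamma=\gamma_{+r}\) by an incidence-variety (elimination) argument. The input is Theorem 1.1 of \cite{BialyMironovShalom2020}: \(\Gamma\) is a real oval of a nonsingular affine algebraic curve \(\{G=0\}\) with \(G\in\mathbb R[X,Y]\). We take \(G\) irreducible, and nonconstant on no component containing the oval. From the proof of Proposition \ref{prop:auto-analyticity} we also have the inverse offset formula \eqref{eq:inverse-offset}, \(\gamma=\Gamma-r n_\Gamma\). In that formula the unit normal \(n_\Gamma\) at a point \(P=(X,Y)\) of \(\Gamma\) is \(\pm\nabla G(P)/|\nabla G(P)|\), with the sign fixed continuously along the oval. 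Since \(\{G=0\}\) is nonsingular, \(\nabla G\neq0\) along \(\Gamma\), so this expression is well defined.

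Consider the real algebraic set
\[
\mathcal I=\{(X,Y,x,y):\ G(X,Y)=0,\ (x-X)G_Y-(y-Y)G_X=0,\ (x-X)^2+(y-Y)^2=r^2\}\subset\mathbb R^4 .
\]
The second equation says that \(x-X\) is parallel to \(\nabla G\). The third says that \(|x-X|=r\). Every point \((\Gamma(s),\gamma(s))\) lies in \(\mathcal I\), because \(\gamma(s)-\Gamma(s)=-r n_\Gamma\) is normal to \(\Gamma\) of length \(r\). Over each point of \(\{G=0\}\) with \(\nabla G\neq0\) the fibre of \(\mathcal I\) consists of exactly two points, \(P\pm r\nabla G/|\nabla G|\). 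Hence the projection \(\pi(x,y,X,Y)=(x,y)\) restricted to \(\mathcal I\) has finite fibres over a one-dimensional base, and \(\mathcal I\) has dimension one.

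The main step is elimination. We complexify and take the Zariski closure of \(\pi(\mathcal I_{\mathbb C})\) in \(\mathbb C^2\). By Chevalley's theorem it is a constructible set of dimension at most one. Concretely, we take resultants: first \(R_1=\operatorname{Res}_Y\) of the first two equations, then eliminate \(X\) from \(R_1\) together with the circle equation. This produces a polynomial \(H(x,y)\) vanishing on \(\pi(\mathcal I)\), and in particular on \(\gamma\). The obstacle I expect is ensuring that \(H\not\equiv0\), that is, that the projection does not fill the plane. This follows from the dimension count above, since a one-dimensional variety cannot project onto a two-dimensional set. To make it concrete, I would instead argue with the image: \(\pi(\mathcal I_{\mathbb C})\) is the image of a curve, hence its closure is a curve or a finite set, and it contains the infinite set \(\gamma\). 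The nonzero equation \(H\) defining this closure can then be taken with real coefficients, by replacing \(H\) with \(H\overline H\) or by taking real and imaginary parts, since \(\mathcal I\) is defined over \(\mathbb R\).

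Finally, the real affine algebraic curve \(\{H=0\}\) contains \(\gamma\), which proves Proposition \ref{prop:gamma-algebraic}. Optionally, we replace \(H\) by its irreducible factor vanishing on the analytic arc \(\gamma\). Such a factor exists because \(\gamma\) is real analytic and connected by Proposition \ref{prop:auto-analyticity}: a product of polynomials vanishing on \(\gamma\) forces one factor to vanish on an open arc, and hence on all of \(\gamma\) by analytic continuation.
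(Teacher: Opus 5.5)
Your proposal is correct and is essentially the paper's argument: the proof in the text just invokes the algebraic-offset implication of \cite{BialyMironovShalom2020}, and its self-contained version (Lemma \ref{lem:appendix-offset-bridge} in Appendix \ref{app:offset-bridge}) uses exactly your incidence set. That set is cut out by the curve equation, the normality condition and the circle of radius $r$; the argument then counts finite fibres over $V(G)$, projects to the $(x,y)$-plane, and takes $H\overline{H}$ to get real coefficients. There are minor slips, none fatal:
\begin{enumerate}
\item The finite fibres belong to the projection to $(X,Y)$, not $(x,y)$. Over complex points they have at most two elements, and none where $G_X^2+G_Y^2=0$.
\item Your resultant recipe never eliminates $Y$ from the circle equation. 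Your image-closure argument makes that recipe unnecessary, and since all of $\gamma$ lies in $\pi(\mathcal I)$, it also spares you the paper's choice of an irreducible component and the final analytic continuation.
\end{enumerate}
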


\begin{proof}
Apply the algebraic-offset implication in the strong-field analysis of \cite{BialyMironovShalom2020} to either nonsingular algebraic parallel oval \(\gamma_{+r}\) or \(\gamma_{-r}\).  This gives a nonzero real polynomial vanishing on the original boundary \(\gamma\).  For a self-contained independent derivation, based only on the incidence equations for a fixed normal offset, see Appendix \ref{app:offset-bridge}.
\end{proof}

\begin{remark}[Why the independent bridge is kept in the appendix]
The main proof uses the published algebraicity implication because this keeps the new rigidity mechanism visible: bidegree reduction, top-mode index rigidity, meromorphic continuation on the normalization, and the universal pole obstruction.  Appendix \ref{app:offset-bridge} records an independent incidence-variety proof so that the algebraic passage from a nonsingular parallel oval to the original boundary does not remain a black box.
\end{remark}

\subsection{A single irreducible complexification}

Change to independent complex coordinates
\[
z=x+\ii y,
\qquad
w=x-\ii y.
\]
Let \(H(z,w)\in\mathbb{C}[z,w]\) be a polynomial vanishing on
\[
\Gamma_{\mathbb{R}}
=
\{(z,\overline z):z\in\gamma\}.
\]
Factor
\[
H=f_1\cdots f_N
\]
into irreducibles over \(\mathbb{C}\).

\begin{lemma}[One irreducible component contains the full oval]\label{lem:one-component}
There exists an irreducible factor \(f\in\mathbb{C}[z,w]\) such that
\[
f(z(s),\overline{z(s)})\equiv0
\]
for all \(s\).
\end{lemma}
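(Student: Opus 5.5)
The proof will be a connectedness argument combined with the identity theorem for real-analytic functions. By Proposition \ref{prop:auto-analyticity} the boundary \(\gamma\) is real analytic. So we may fix a real-analytic parametrization \(s\mapsto z(s)\) of the closed curve, defined on the circle \(\mathbb{R}/L\mathbb{Z}\), where \(L\) is the length of \(\gamma\). For each irreducible factor \(f_i\) of \(H\), the function
\[
g_i(s)=f_i\bigl(z(s),\overline{z(s)}\bigr)
\]
is then a real-analytic (complex-valued) function of \(s\). This is because \(f_i\) is a polynomial in \(z\) and \(w\), and both \(z(s)\) and \(\overline{z(s)}\) are real analytic in \(s\).

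The key steps are as follows.
\begin{enumerate}
\item Since \(H=f_1\cdots f_N\) vanishes on \(\Gamma_{\mathbb R}\), we have \(g_1(s)\cdots g_N(s)=0\) for every \(s\).
\item Consequently the circle \(\mathbb{R}/L\mathbb{Z}\) is the finite union of the zero sets \(Z_i=\{s:g_i(s)=0\}\).
\item A finite union of sets with no accumulation points can only cover finitely many points, and the circle is infinite. Hence some \(Z_i\) is infinite.
\item By compactness of the circle, that \(Z_i\) has an accumulation point.
\item The circle is connected, so the identity theorem for real-analytic functions of one real variable forces \(g_i\equiv0\).
\item We then take \(f=f_i\).
\end{enumerate}
An equivalent way to phrase step 3 is through Baire's theorem: some \(Z_i\) contains an open interval, and analytic continuation along the connected parameter circle propagates the vanishing to the whole circle.

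We apply the identity theorem to real and imaginary parts separately, or equivalently to the complex-valued analytic function directly. This requires analyticity of the parametrization on the whole closed curve, not only locally. That is exactly what Proposition \ref{prop:auto-analyticity} provides: local real-analytic charts on the algebraic oval glue to give a global analytic structure on the compact curve \(\gamma\). Connectedness of \(\gamma\), a Jordan curve, is what lets the local vanishing spread to all of \(\gamma\).

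The only delicate point is this use of global real analyticity. Without it, with a merely smooth boundary, different arcs could lie on different components, and the conclusion would fail. Once analyticity is granted, the rest of the argument is routine. If one prefers to avoid the global chart, an alternative is to cover \(\gamma\) by finitely many analytic charts and argue as follows. The set of parameters near which \(g_i\) vanishes identically is open by definition. It is also closed, by the identity theorem applied in an overlapping chart. Since some \(g_i\) vanishes on an open interval, this set is nonempty, and connectedness of \(\gamma\) then makes it all of \(\gamma\).
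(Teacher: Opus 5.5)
Your argument is correct and is essentially the paper's own proof. Both use real analyticity of each $s\mapsto f_j(z(s),\overline{z(s)})$ (via Proposition~\ref{prop:auto-analyticity}), the fact that a factor not vanishing identically has only finitely many zeros on the compact connected parameter circle, and the observation that finitely many finite sets cannot cover that circle.

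Your side remark that the conclusion would fail for a merely smooth boundary is not needed, and I believe it is too pessimistic. A Puiseux argument at a putative switching point appears to show that a connected regular $C^\infty$ curve contained in a plane algebraic curve already lies on a single irreducible component.
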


\begin{proof}
For each factor \(f_j\), the function
\[
s\longmapsto f_j(z(s),\overline{z(s)})
\]
is real analytic. Their product vanishes identically on the parameter circle. If no factor vanished identically, each zero set would be discrete and therefore finite on the compact parameter circle. A finite union of finite sets cannot cover the whole circle. Thus at least one factor vanishes identically.
\end{proof}

Fix such an irreducible polynomial \(f\), and write
\[
C=\{(z,w)\in\mathbb{C}^{2}:f(z,w)=0\}.
\]
Let \(\overline C\subset\mathbb{CP}^{2}\) be the projective closure, and let
\[
\nu:\widehat C\longrightarrow\overline C
\]
be its normalization. The compact Riemann surface \(\widehat C\) is the natural stage for the valuation argument.

\subsection{The intrinsic meromorphic tangent variable}

Along the physical real oval,
\[
w=\overline z,
\qquad
\frac{dz}{ds}=t,
\qquad
\frac{dw}{ds}=\overline t=\frac1t.
\]
Therefore
\begin{equation}
\frac{dz}{dw}=t^2=u.
\label{eq:u-differential}
\end{equation}

The intrinsic definition of the tangent variable is
\begin{equation}
\boxed{
u:=\frac{dz}{dw}.
}
\label{eq:u-primary}
\end{equation}
This is the primary definition used throughout the global argument. The affine formula involving partial derivatives is only a local representation of the same meromorphic function.

To see this directly, choose any local parameter \(\xi\) on \(\widehat C\). Then
\[
dz=z_\xi(\xi)\,d\xi,
\qquad
dw=w_\xi(\xi)\,d\xi,
\]
with \(z_\xi\) and \(w_\xi\) meromorphic. Since \(w\) is nonconstant on \(\widehat C\), \(w_\xi\) is not identically zero, and
\[
u=\frac{z_\xi}{w_\xi}
\]
is meromorphic after canceling common powers in the local parameter. Thus ramification points, zeros of both differentials, and singularities of the affine plane model create at most zeros or poles of \(u\); they do not create any ambiguity on the normalization.

On the regular affine locus of \(C\), differentiating
\[
f(z,w)=0
\]
gives
\[
f_z\,dz+f_w\,dw=0,
\]
and therefore
\begin{equation}
\boxed{
u=\frac{dz}{dw}=-\frac{f_w}{f_z}.
}
\label{eq:u-f}
\end{equation}

\begin{proposition}[Meromorphic continuation of the tangent variable]\label{prop:u-meromorphic}
The intrinsically defined function
\[
u:=\frac{dz}{dw}
\]
is meromorphic on \(\widehat C\), agrees with \(t^2\) on the physical oval, and is represented on the regular affine locus by \(-f_w/f_z\).
\end{proposition}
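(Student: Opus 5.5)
The plan is to prove the three assertions of Proposition \ref{prop:u-meromorphic} in turn: meromorphy on \(\widehat C\), agreement with \(t^2\) on the oval, and the affine formula. The whole argument is local on the compact Riemann surface. The one global input is that \(z\) and \(w\), pulled back by \(\nu\), are nonconstant meromorphic functions on \(\widehat C\).

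\textbf{Meromorphy.} First, \(z\circ\nu\) and \(w\circ\nu\) are rational functions of the projective coordinates restricted to \(\overline C\). Their pullbacks to the normalization are therefore meromorphic functions on \(\widehat C\), in particular holomorphic away from the finitely many points over the line at infinity. Next, \(w\) is nonconstant on \(\widehat C\). Otherwise \(C\) would be a line \(w=c\), and the physical oval would satisfy \(\overline{z(s)}\equiv c\), contradicting that \(\gamma\) is a Jordan curve.

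Hence \(dw\) is a nonzero meromorphic differential and \(dz\) is a meromorphic differential. Their quotient \(u=dz/dw\) is then a well-defined meromorphic function on \(\widehat C\). In a local parameter \(\xi\) it is \(z_\xi/w_\xi\), which is independent of the choice of \(\xi\) by the chain rule, since the factor \(d\xi/d\xi'\) cancels. Ramification points of \(w\), and points over singularities of \(\overline C\), contribute only zeros or poles of \(u\), because each branch of \(\widehat C\) has its own parameter.

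\textbf{Agreement with \(t^2\).} I would lift the oval to \(\widehat C\). The real-analytic map \(s\mapsto(z(s),\overline{z(s)})\) lands in \(C\) by Lemma \ref{lem:one-component}. At all but finitely many \(s\) it passes through smooth points of \(C\), where \(\nu\) is a local biholomorphism, so it lifts to a real-analytic curve \(\hat\gamma(s)\) there. Along this lift, \(dz/ds=t\) and \(dw/ds=\overline t=1/t\). By the chain rule, \(u(\hat\gamma(s))=t(s)^2\), which is \eqref{eq:u-differential}.

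The step I expect to be the main obstacle is the finitely many parameters where the oval might meet \(\operatorname{Sing}(C)\). There I would use that the oval is an embedded real-analytic arc. Such an arc lies in a single local analytic branch of \(C\), which corresponds to a single point of \(\widehat C\), so the lift extends continuously and analytically across these parameters. Both \(u\circ\hat\gamma\) and \(t^2\) are continuous, and they agree off a finite set, so they agree everywhere. In particular \(u\) has no pole on the oval, since \(|t^2|=1\).

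\textbf{Affine formula.} At a regular affine point, \((f_z,f_w)\neq(0,0)\), and \(\nu\) is a local isomorphism. Pulling back \(f(z,w)=0\) gives \(f_z\,dz+f_w\,dw=0\). Where \(f_z\neq0\), \(w\) is a local coordinate and \(u=-f_w/f_z\) holds directly. At regular points where \(f_z=0\), we have \(f_w\neq0\), so \(dw=0\) along the curve. There both sides have a pole, and \(-f_w/f_z\) agrees with \(dz/dw\) as meromorphic functions by the identity theorem, since the two coincide on the dense open set where \(f_z\neq0\).
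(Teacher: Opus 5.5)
Your proposal is correct and follows the same route as the paper: meromorphy comes from writing \(u=z_\xi/w_\xi\) in a local parameter on \(\widehat C\), agreement with \(t^2\) comes from \(dz/ds=t\) and \(dw/ds=1/t\) along the oval, and the affine formula comes from differentiating \(f(z,w)=0\). Your extra care fills in details the paper leaves implicit: the nonconstancy of \(w\), lifting the oval through possible singular points of \(C\) along its single analytic branch, and the identity-theorem treatment of zeros of \(f_z\) (the later global bridge only needs agreement on one regular arc).
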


\begin{proof}
The coordinate functions \(z,w\) pull back to nonconstant meromorphic functions on \(\widehat C\). In a local parameter \(\xi\), the quotient of their differentials is the quotient \(z_\xi/w_\xi\), hence a meromorphic function. Along the physical oval,
\[
\frac{dz}{ds}=t,
\qquad
\frac{dw}{ds}=\overline t=t^{-1},
\]
so
\[
\frac{dz}{dw}=t^2.
\]
On the regular affine locus, the differential identity \(f_zdz+f_wdw=0\) gives \eqref{eq:u-f}.
\end{proof}

The use of \(u=dz/dw\) as the primary definition is essential for the later valuation argument: it remains meaningful on every branch of the normalization, including points lying over affine singularities and points at infinity.

\subsection{Global extension of the reflection identities}

\begin{proposition}[Global bridge]\label{prop:global-bridge}
Every reflection identity \eqref{eq:reflection-mode} extends as a meromorphic identity on \(\widehat C\). In particular, \eqref{eq:top-reflection} and \eqref{eq:next-reflection} hold identically on \(\widehat C\).
\end{proposition}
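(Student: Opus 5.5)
The plan is to use the identity theorem on the connected compact Riemann surface \(\widehat C\).

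\textbf{Step 1: define the identity as a meromorphic function on \(\widehat C\).} For each \(1\le k\le m\), form
\[
D_k:=A_k(z,w)\,u^k-(-1)^kB_k(z,w).
\]
Here \(A_k,B_k\) are the polynomials in \((z,w)\) read off from the Laurent expansion \eqref{eq:Laurent-expansion}. They are genuine polynomials, obtained by expanding \(\sum_j(z+r\zeta)^jP_j(w+r\zeta^{-1})\) with \(z,w\) treated as independent variables; Lemma \ref{lem:top-two} records the cases \(k=m,m-1\). The functions \(z\circ\nu\) and \(w\circ\nu\) are meromorphic on \(\widehat C\), and \(u=dz/dw\) is meromorphic by Proposition \ref{prop:u-meromorphic}. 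Hence \(D_k\) is a meromorphic function on \(\widehat C\).

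\textbf{Step 2: \(D_k\) vanishes on an arc of \(\widehat C\).} The physical oval \(\Gamma_{\mathbb R}=\{(z(s),\overline{z(s)})\}\) lies in \(C\) by Lemma \ref{lem:one-component}. At all but finitely many of its points it lies in the smooth locus of \(\overline C\), where \(\nu\) is a biholomorphism. Along such an open sub-arc there are three inputs:
\begin{itemize}
\item \(z\circ\nu\) and \(w\circ\nu\) restrict to \(z(s)\) and \(\overline{z(s)}\);
\item \(u\) restricts to \(t(s)^2\) by Proposition \ref{prop:u-meromorphic};
\item Proposition \ref{prop:reflection-identities} gives \(A_k(z,\overline z)t^{2k}=(-1)^kB_k(z,\overline z)\) at every physical boundary point.
\end{itemize}
Therefore \(D_k\) vanishes on the lift of this sub-arc.

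\textbf{Step 3: conclude by the identity theorem.} The lifted sub-arc is an infinite set with accumulation points in \(\widehat C\), because \(z(s)\) is injective and the arc is a real curve, not a discrete set. The surface \(\widehat C\) is connected because \(f\) is irreducible, so its normalization is a connected compact Riemann surface. A meromorphic function on a connected Riemann surface vanishing on a set with an accumulation point vanishes identically. Hence \(D_k\equiv0\) on \(\widehat C\). Applying this with \(k=m\) and \(k=m-1\), and substituting the formulas of Lemma \ref{lem:top-two}, gives \eqref{eq:top-reflection} and \eqref{eq:next-reflection} on all of \(\widehat C\).

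\textbf{Main obstacle.} The delicate step is Step 2: making sure the real oval genuinely lifts to an arc of \(\widehat C\) on which \(u\) coincides with \(t^2\). Two points need checking.
\begin{itemize}
\item \emph{Singular points of \(\overline C\).} The oval might pass through singular points of \(\overline C\). There are only finitely many such points, so I will discard them and work on a regular sub-arc.
\item \emph{Consistency of the branch of \(u\).} On the smooth locus, \(z\) or \(w\) serves as a local coordinate except where \(dw=0\). The quantity \(dz/dw\) computed along the arc equals the ratio \((dz/ds)/(dw/ds)=t/\overline t\), because \(s\mapsto(z(s),\overline{z(s)})\) is a holomorphic-parameter curve restricted to real \(s\). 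Its derivative is nonvanishing since \(|dz/ds|=1\), so \(dw/ds=\overline t\ne0\). This forces \(u=t^2\) with no branch ambiguity.
\end{itemize}
Once this is settled, the rest is a direct application of analytic continuation.
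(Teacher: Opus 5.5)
Your proposal is correct and follows the paper's proof: you form the meromorphic function \(A_k(z,w)u^k-(-1)^kB_k(z,w)\) on \(\widehat C\), observe that it vanishes along a regular arc of the lifted physical oval, and conclude by the identity theorem on the connected surface \(\widehat C\). Your extra steps (discarding the finitely many singular points, and checking that \(u=t^2\) along the arc) only make explicit what the paper leaves implicit; the identity \(u=t^2\) on the oval is already recorded in Proposition \ref{prop:u-meromorphic}.
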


\begin{proof}
For each \(k\), define on \(\widehat C\)
\[
R_k
=
A_k(z,w)u^k-(-1)^kB_k(z,w).
\]
Because \(A_k\) and \(B_k\) are polynomials and \(u\) is meromorphic, \(R_k\) is meromorphic. It vanishes on a nonempty regular arc of the physical real oval, where the original reflection law holds. The zero set has an accumulation point on the Riemann surface, so the identity theorem gives
\[
R_k\equiv0
\]
on \(\widehat C\).
\end{proof}

\begin{remark}[The bridge lemma]
Propositions \ref{prop:bidegree}, \ref{prop:auto-analyticity}, \ref{prop:gamma-algebraic}, \ref{prop:u-meromorphic}, and \ref{prop:global-bridge} together constitute the bridge from the strong-field polynomial-integrability hypothesis to global meromorphic top-mode identities on the compact normalization.  The published inputs are isolated in Theorem \ref{thm:BMS-input} and Theorem 1.1 of \cite{BialyMironovShalom2020}; Appendix \ref{app:offset-bridge} independently verifies the fixed-offset algebraicity step.  This bridge changes the arena of the argument, from the real oval to a compact Riemann surface, but it does not itself impose any coalescence of the roots.
\end{remark}

\section{Boundary winding determines the degree and location of the top-mode roots}
\label{sec:index}

This section is the first rigidity hinge.  It uses only the top reflection identity on the physical boundary and determines exactly how many roots \(Q_m\) has and where they lie.  It does \emph{not} show that the roots coincide.  At the end of the section the allowed configuration is
\[
Q_m(z)=q\prod_{j=1}^{m}(z-a_j),
\qquad
a_j\in\Omega,
\]
with multiplicities included and with the points \(a_j\) still permitted to be distinct.  Their coalescence will be a separate consequence of the analysis at infinity.

Set
\[
Q(z)=Q_m(z).
\]
On the physical boundary, \eqref{eq:top-reflection} becomes
\begin{equation}
\boxed{
\overline{Q(z)}\,t^{2m}=(-1)^mQ(z).
}
\label{eq:top-real}
\end{equation}

\subsection{The logarithmic-derivative identity}

\begin{lemma}\label{lem:log-derivative}
At every boundary point where \(Q(z)\neq0\),
\begin{equation}
\boxed{
\operatorname{Im}
\left(
\frac{Q'(z)}{Q(z)}t
\right)
=
m\kappa.
}
\label{eq:logderiv}
\end{equation}
\end{lemma}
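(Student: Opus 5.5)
The plan is to take the logarithmic derivative in arclength of the top boundary identity \eqref{eq:top-real} and read off its imaginary part. Fix a boundary point \(s_0\) with \(Q(z(s_0))\neq0\). By continuity \(Q(z(s))\neq0\) on a small interval around \(s_0\), and there both sides of \eqref{eq:top-real} are smooth and nonvanishing; note that \(t\) never vanishes since \(|t|=1\). So on that interval we may differentiate \(\log\) of each side. We use only derivatives of logarithms, never a global branch, and each is a well-defined smooth function.

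\textbf{Step 1: the right-hand side.} By the chain rule and \(z'(s)=t\),
\[
\frac{d}{ds}\log Q(z(s))=\frac{Q'(z)}{Q(z)}\,t .
\]

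\textbf{Step 2: the left-hand side.} Write \(\overline{Q(z(s))}=\overline{Q}(\overline{z(s)})\), where \(\overline{Q}\) has conjugated coefficients. Differentiating gives
\[
\frac{d}{ds}\log\overline{Q(z(s))}=\overline{\left(\frac{Q'(z)}{Q(z)}\,t\right)} .
\]
For the other factor, \(t=e^{\ii\tau}\) gives
\[
\frac{d}{ds}\log t^{2m}=2m\,\ii\,\tau'=2m\ii\kappa .
\]
The constant \((-1)^m\) contributes nothing.

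\textbf{Step 3: combine.} Set \(\Lambda=Q'(z)t/Q(z)\). Equating the two logarithmic derivatives yields
\[
\overline{\Lambda}+2m\ii\kappa=\Lambda,
\]
that is, \(\Lambda-\overline\Lambda=2\ii\operatorname{Im}\Lambda=2m\ii\kappa\). This is \eqref{eq:logderiv}.

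The only point needing care is the justification that logarithmic differentiation is legitimate. This is handled by restricting to a neighborhood where \(Q\neq0\) and noting that \(\frac{d}{ds}\log g=g'/g\) for any smooth nonvanishing complex function \(g\), with no branch choice involved. I expect no genuine obstacle beyond this. The geometric content is that the argument of \(Q\) along the boundary turns at exactly \(m\) times the rate of the tangent angle, which is what the later winding count will exploit.
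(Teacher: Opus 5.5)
Your proposal is correct and follows essentially the same route as the paper: differentiate the top boundary identity \eqref{eq:top-real} logarithmically in arclength, with \(t^{2m}\) contributing \(2m\ii\kappa\) and \(Q(z)\), \(\overline{Q(z)}\) contributing \(\Lambda\) and \(\overline{\Lambda}\). You also add a justification the paper leaves implicit: that \(g'/g\) is well defined near any point where \(Q(z)\neq0\), without choosing a branch of the logarithm.
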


\begin{proof}
From \eqref{eq:top-real},
\[
t^{2m}=(-1)^m\frac{Q(z)}{\overline{Q(z)}}.
\]
Differentiate logarithmically with respect to arclength \(s\). Since
\[
t=e^{\ii\tau},
\qquad
\tau'=\kappa,
\]
the left side contributes \(2m\ii\kappa\). The right side contributes
\[
\frac{Q'(z)}{Q(z)}t
-
\overline{\frac{Q'(z)}{Q(z)}t}.
\]
Dividing by \(2\ii\) gives \eqref{eq:logderiv}.
\end{proof}

\subsection{Boundary roots and principal values}

The dangerous case is a root of \(Q\) lying on \(\gamma\). We treat it explicitly.

\begin{lemma}[Local form at a boundary root]\label{lem:boundary-root-local}
Suppose \(z_0=z(s_0)\in\gamma\) is a root of \(Q\) of multiplicity \(\nu\geq1\). Then
\begin{equation}
\frac{Q'(z(s))}{Q(z(s))}t(s)
=
\frac{\nu}{s-s_0}+O(1).
\label{eq:boundary-root-local}
\end{equation}
In particular, the singular principal part is real.
\end{lemma}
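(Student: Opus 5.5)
The plan is to factor out the root and expand in arclength.

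Write \(Q(z)=(z-z_0)^\nu G(z)\) with \(G\) a polynomial and \(G(z_0)\neq0\). Then
\[
\frac{Q'(z)}{Q(z)}=\frac{\nu}{z-z_0}+\frac{G'(z)}{G(z)}.
\]
Since \(G\) does not vanish near \(z_0\), the second term is bounded, indeed smooth, for \(s\) near \(s_0\). Multiplying by \(|t|=1\) keeps it \(O(1)\). So everything reduces to the single term
\[
\frac{\nu\,t(s)}{z(s)-z_0}.
\]

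For this term I use the smoothness of \(\gamma\), which is even real analytic by Proposition \ref{prop:auto-analyticity}, though \(C^2\) suffices. Taylor expansion gives
\[
z(s)-z_0=t(s_0)(s-s_0)+\tfrac12 z''(s_0)(s-s_0)^2+O((s-s_0)^3),
\qquad
t(s)=t(s_0)+O(s-s_0).
\]
Hence \(z(s)-z_0=t(s_0)(s-s_0)\bigl(1+O(s-s_0)\bigr)\), with \(t(s_0)\neq0\) because arclength parametrization is regular. Dividing,
\[
\frac{t(s)}{z(s)-z_0}=\frac{t(s_0)+O(s-s_0)}{t(s_0)(s-s_0)(1+O(s-s_0))}=\frac{1}{s-s_0}+O(1).
\]
This yields \eqref{eq:boundary-root-local}. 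The principal part \(\nu/(s-s_0)\) has a real coefficient, which proves the last claim.

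No step here is a real obstacle. The only care needed is to check that the \(O(1)\) remainder is uniform on a punctured neighbourhood of \(s_0\). This follows because \(z(s)\neq z_0\) for \(s\neq s_0\) nearby (the curve is embedded), and because the factor \(1+O(s-s_0)\) stays away from zero on a small enough interval.

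The consequence I am after is this: the imaginary part of \(\frac{Q'}{Q}t\) stays bounded across boundary roots. Lemma \ref{lem:log-derivative} therefore extends continuously through \(s_0\). In the later half-index computation, the boundary root then contributes only via a principal value of the real singular term.
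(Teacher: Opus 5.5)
Your proof is correct and is essentially the paper's own argument: factor \(Q(z)=(z-z_0)^\nu R(z)\) with \(R(z_0)\neq0\), note that \(R'/R\cdot t\) stays bounded, and use the arclength expansion \(z(s)-z_0=t(s_0)(s-s_0)+O((s-s_0)^2)\) to obtain \(\nu t/(z-z_0)=\nu/(s-s_0)+O(1)\). You spell out the step the paper leaves implicit, namely writing \(t(s)=t(s_0)+O(s-s_0)\) and inverting \(1+O(s-s_0)\); only the first-order expansion, valid for \(C^2\), is needed, so the second-order Taylor term you wrote is unnecessary.
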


\begin{proof}
Write
\[
Q(z)=(z-z_0)^\nu R(z),
\qquad
R(z_0)\neq0.
\]
Since \(s\) is arclength,
\[
z(s)-z_0=t(s_0)(s-s_0)+O((s-s_0)^2).
\]
Hence
\[
\frac{Q'}Q\,z'
=
\nu\frac{z'}{z-z_0}
+
\frac{R'}R\,z'
=
\frac{\nu}{s-s_0}+O(1).
\]
\end{proof}

Thus the imaginary part in \eqref{eq:logderiv} has no nonintegrable singularity at a boundary zero.

\subsection{The half-index lemma}

\begin{lemma}[Principal-value half residue]\label{lem:half-residue}
Let \(\gamma\) be a positively oriented \(C^2\) Jordan curve and let \(\alpha\in\gamma\). Parametrize \(\gamma\) by arclength near \(\alpha=z(s_0)\), and define the symmetric arclength principal value by
\[
\PV\oint_\gamma\frac{dz}{z-\alpha}
:=
\lim_{\varepsilon\downarrow0}
\int_{\gamma\setminus z((s_0-\varepsilon,s_0+\varepsilon))}
\frac{dz}{z-\alpha},
\]
whenever the limit exists. Then
\begin{equation}
\boxed{
\PV\oint_\gamma\frac{dz}{z-\alpha}=\ii\pi.
}
\label{eq:half-residue}
\end{equation}
\end{lemma}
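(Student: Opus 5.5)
The plan is to split the integrand into its real and imaginary parts and treat each separately. Along the excised curve \(\gamma_\varepsilon:=\gamma\setminus z((s_0-\varepsilon,s_0+\varepsilon))\), which runs from \(z(s_0+\varepsilon)\) to \(z(s_0-\varepsilon+L)\) with \(L\) the length, we have
\[
\int_{\gamma_\varepsilon}\frac{dz}{z-\alpha}
=
\Bigl[\log|z-\alpha|\Bigr]_{\mathrm{ends}}
+\ii\,\Delta_\varepsilon\arg(z-\alpha).
\]
So it suffices to show two things. First, the real part tends to \(0\). Second, the continuous change of argument \(\Delta_\varepsilon\) along \(\gamma_\varepsilon\) tends to \(\pi\).

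\textbf{Real part.} By \(C^2\) regularity and arclength parametrization,
\[
z(s_0\pm\varepsilon)-\alpha=\pm\varepsilon\,t(s_0)+O(\varepsilon^2).
\]
Hence both endpoints satisfy \(|z-\alpha|=\varepsilon(1+O(\varepsilon))\), and the difference of the two logarithms is \(O(\varepsilon)\to0\). It is exactly here that the symmetric arclength excision is used: an asymmetric cut would leave \(\log\) of a ratio of lengths. The same expansion also gives the endpoint directions,
\[
\arg\bigl(z(s_0+\varepsilon)-\alpha\bigr)\to\tau(s_0),
\qquad
\arg\bigl(z(s_0-\varepsilon)-\alpha\bigr)\to\tau(s_0)+\pi \pmod{2\pi}.
\]
Consequently \(\Delta_\varepsilon\to\pi+2\pi k\) for some fixed integer \(k\), and the remaining task is to pin down \(k=0\).

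\textbf{Imaginary part: determining \(k\).} This is the main obstacle, since it needs a global topological input. I would close the excised arc by a small arc \(\sigma_\varepsilon\) of the circle \(|z-\alpha|=\varepsilon'\) lying inside \(\Omega\). Near \(\alpha\), \(\gamma\) is a \(C^2\) graph over its tangent line, so for small \(\varepsilon\) the interior side is locally the half-disk to the left of \(t(s_0)\).

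The closed curve \(\gamma_\varepsilon\cup\sigma_\varepsilon\) is then a Jordan curve whose interior excludes \(\alpha\). That interior is \(\Omega\) minus a small neighborhood of \(\alpha\), positively oriented, so its winding number about \(\alpha\) is \(0\). (Alternatively, compare with a point \(\alpha-\delta n(s_0)\) outside \(\Omega\), which has winding \(0\), and use continuity off a small neighborhood.)

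To make the endpoints match, choose the radius \(\varepsilon'\) with \(\varepsilon'=\varepsilon(1+O(\varepsilon))\), adjusting by a negligible connecting segment. The small arc \(\sigma_\varepsilon\) runs from the direction \(\tau(s_0)+\pi\) back to \(\tau(s_0)\) through the inward side \(\tau(s_0)+\pi/2\). Traversing it clockwise, it contributes an argument change of \(-\pi+o(1)\). Since total winding \(0\) means
\[
\Delta_\varepsilon+(-\pi+o(1))=0,
\]
we get \(\Delta_\varepsilon\to\pi\), i.e. \(k=0\).

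Combining the real and imaginary parts gives
\[
\PV\oint_\gamma\frac{dz}{z-\alpha}=\ii\pi,
\]
which is \eqref{eq:half-residue}. Strict convexity of \(\Omega\) makes the local picture even simpler, but only the \(C^2\) Jordan structure is needed.
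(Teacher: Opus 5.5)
Your proof is correct and follows essentially the same route as the paper. The shared ingredients are: close the excised contour with a small clockwise arc inside \(\Omega\) contributing \(-\ii\pi+o(1)\); use that the closed contour has zero winding about \(\alpha\) (the paper phrases this as Cauchy's theorem); and use the symmetric arclength cutoff so the endpoint log-moduli cancel. The only difference is bookkeeping: you split off the modulus and argument parts and match radii with a negligible connecting segment, whereas the paper cuts at the circle first and then shows that switching to the symmetric arclength cutoff costs \(o(1)\).
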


\begin{proof}
We give a contour-indentation proof, which fixes both the factor \(1/2\) and the sign. Translate and rotate so that \(\alpha=0\) and the positively oriented tangent at \(\alpha\) points along the positive real axis. The interior of the Jordan domain then lies locally on the left side of the oriented tangent.

For sufficiently small \(\rho>0\), the circle \(|z|=\rho\) meets \(\gamma\) in two points near the two tangent directions \(\pm1\). Remove the short boundary arc passing through \(0\), and close the remaining positively oriented boundary by the circular arc \(A_\rho\subset\{|z|=\rho\}\) lying inside the domain. The orientation induced on \(A_\rho\) is clockwise. The resulting indented contour \(\Gamma_\rho\) contains no pole of \(1/z\), so Cauchy's theorem gives
\begin{equation}
0
=
\int_{\Gamma_\rho}\frac{dz}{z}
=
\int_{\gamma\setminus U_\rho}\frac{dz}{z}
+
\int_{A_\rho}\frac{dz}{z},
\label{eq:indent-zero}
\end{equation}
where \(U_\rho\) is the deleted short boundary arc.

Write \(z=\rho e^{\ii\theta}\) on \(A_\rho\). Because \(\gamma\) is \(C^2\), the curve is a regular graph over its tangent line in a sufficiently small neighborhood of \(\alpha\); consequently the small circle meets the local boundary in exactly two points, whose directions converge to the opposite tangent directions. Thus the clockwise angular variation along \(A_\rho\) is
\[
-\pi+o(1).
\]
Therefore
\[
\int_{A_\rho}\frac{dz}{z}
=
\ii\,\Delta\theta
=
-\ii\pi+o(1).
\]
Equation \eqref{eq:indent-zero} now yields
\[
\int_{\gamma\setminus U_\rho}\frac{dz}{z}
=
\ii\pi+o(1).
\]
The \(C^2\) expansion
\[
z(s)-\alpha=t(s_0)(s-s_0)+O((s-s_0)^2)
\]
shows that replacing the small-circle cutoff by the symmetric arclength cutoff changes the truncated integral by \(o(1)\): the two endpoint logarithmic moduli have the same leading behavior and the local angular discrepancy tends to zero. Passing to the limit gives
\[
\PV\oint_\gamma\frac{dz}{z-\alpha}=\ii\pi.
\]
\end{proof}

\begin{proposition}[Half-index formula]\label{prop:half-index}
Let \(Q\) be a nonzero complex polynomial. Let
\[
N_{\mathrm{in}},
\qquad
N_{\partial},
\qquad
N_{\mathrm{out}}
\]
be the numbers of its roots inside \(\gamma\), on \(\gamma\), and outside \(\gamma\), counted with multiplicity. If several roots lie on \(\gamma\), the principal value is understood by deleting symmetric arclength intervals around all boundary roots and then letting all deletion radii tend to zero. Then
\begin{equation}
\boxed{
\PV\operatorname{Im}
\oint_\gamma\frac{Q'(z)}{Q(z)}\,dz
=
2\pi N_{\mathrm{in}}+\pi N_{\partial}.
}
\label{eq:half-index}
\end{equation}
\end{proposition}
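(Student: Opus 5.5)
The plan is to factor \(Q\) and reduce to the one-root case. Write
\[
Q(z)=q\prod_{j}(z-\alpha_j)^{\nu_j}
\]
with distinct roots \(\alpha_j\). Then
\[
\frac{Q'(z)}{Q(z)}\,dz=\sum_j \nu_j\frac{dz}{z-\alpha_j}.
\]
The principal value of the sum will be the sum of the principal values, term by term. We therefore only need to compute, for a single point \(\alpha\), the contribution \(\operatorname{Im}\PV\oint_\gamma dz/(z-\alpha)\), and then add them up with multiplicities.

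Some care is needed because the deletion prescription removes symmetric intervals around \emph{all} boundary roots simultaneously. For a root \(\alpha_j\), the intervals deleted around the other boundary roots \(\alpha_k\neq\alpha_j\) lie away from \(\alpha_j\). The function \(1/(z-\alpha_j)\) is bounded there, so removing these intervals changes the integral of that term by \(o(1)\) as the radii tend to zero. Hence each term converges separately to its own principal value with only its own deletion, and the splitting is legitimate.

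The single-root computation has three cases.
\begin{itemize}
\item If \(\alpha\) lies off \(\gamma\), no deletion is needed. Then \(\oint_\gamma dz/(z-\alpha)=2\pi\ii\operatorname{ind}_\gamma(\alpha)\) by the argument principle (Cauchy's formula) for the positively oriented Jordan curve \(\gamma\). The index is \(1\) for \(\alpha\in\Omega\) and \(0\) for \(\alpha\) in the unbounded component.
\item If \(\alpha\in\gamma\), Lemma \ref{lem:half-residue} gives the value \(\ii\pi\) for the symmetric arclength principal value.
\item The existence of each limit is guaranteed by Lemma \ref{lem:half-residue} itself, or alternatively by the local form in Lemma \ref{lem:boundary-root-local}: the singular part \(\nu/(s-s_0)\) is real and odd, so its symmetric truncations cancel.
\end{itemize}
Taking imaginary parts and summing over roots with multiplicities gives
\[
2\pi N_{\mathrm{in}}+\pi N_\partial+0\cdot N_{\mathrm{out}},
\]
which is \eqref{eq:half-index}. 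Interchanging \(\operatorname{Im}\) with \(\PV\) is harmless because \(\operatorname{Im}\) is a continuous linear map.

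The only delicate point I expect is this uniformity of the simultaneous deletion. One must check that each term's truncated integral converges regardless of how the other radii go to zero. The plan is to bound each term's integral over a small deleted interval by the length of that interval times \(\sup|z-\alpha_j|^{-1}\) on it, which tends to zero. Everything else follows from the earlier lemma and Cauchy's theorem. The constant \(q\) contributes nothing, and the \(C^2\) (indeed analytic) regularity of \(\gamma\) needed for Lemma \ref{lem:half-residue} is available from Proposition \ref{prop:auto-analyticity}, or simply from the smoothness assumption.
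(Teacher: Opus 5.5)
Your proposal is correct and matches the paper's proof: factor $Q$, split $Q'/Q$ into the simple-pole terms $\nu_j/(z-\alpha_j)$, and evaluate each term by Cauchy's theorem for roots off $\gamma$ and by Lemma~\ref{lem:half-residue} for roots on $\gamma$. Your extra estimate, that the small intervals deleted around the other boundary roots change each term only by $o(1)$, spells out a step the paper leaves implicit, namely that the simultaneous principal value may be computed one term at a time.
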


\begin{proof}
Factor
\[
Q(z)=c\prod_j(z-\alpha_j)^{\nu_j}.
\]
Then
\[
\frac{Q'}Q=\sum_j\frac{\nu_j}{z-\alpha_j}.
\]
An interior root contributes \(2\pi\ii\nu_j\), an exterior root contributes \(0\), and a boundary root contributes \(\pi\ii\nu_j\) by Lemma \ref{lem:half-residue}. Taking imaginary parts gives \eqref{eq:half-index}.
\end{proof}

\subsection{The universal index identity}

\begin{proposition}[Universal top-mode index rigidity]\label{prop:index-rigidity}
The top polynomial \(Q_m\) has degree exactly \(m\), and all \(m\) roots, counted with multiplicity, lie strictly inside \(\Omega\). Equivalently,
\begin{equation}
\boxed{
\deg Q_m=m,
\qquad
N_{\mathrm{in}}=m,
\qquad
N_{\partial}=N_{\mathrm{out}}=0.
}
\label{eq:index-conclusion}
\end{equation}
\end{proposition}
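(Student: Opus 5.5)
The plan is to integrate the logarithmic-derivative identity of Lemma \ref{lem:log-derivative} once around the positively oriented boundary and compare the result with the half-index formula of Proposition \ref{prop:half-index}. Since \(dz=t\,ds\), the left side of \eqref{eq:logderiv} integrates to \(\operatorname{Im}\oint_\gamma (Q'/Q)\,dz\). Where \(Q\) has no boundary roots, the right side integrates to \(m\int_\gamma\kappa\,ds=2\pi m\) by the turning-number identity. If \(Q\) has roots on \(\gamma\), Lemma \ref{lem:boundary-root-local} shows that the singular part of \((Q'/Q)t\) is real, namely \(\nu/(s-s_0)\). Hence the imaginary part is bounded near each boundary root, and \eqref{eq:logderiv} holds on the complement of finitely many points. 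So after deleting symmetric arclength intervals and letting them shrink, the principal value of \(\operatorname{Im}\oint (Q'/Q)\,dz\) equals \(2\pi m\). Combined with \eqref{eq:half-index}, this gives
\[
2N_{\mathrm{in}}+N_{\partial}=2m.
\]
Before this, one should check that \(Q=Q_m\) has only finitely many zeros on \(\gamma\). This holds because \(Q_m\not\equiv0\) by Lemma \ref{lem:top-nonzero}.

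The second ingredient is the bound \(N_{\mathrm{in}}+N_\partial+N_{\mathrm{out}}=\deg Q_m\le m\), which holds because \(Q_m\) has degree at most \(m\) by Proposition \ref{prop:bidegree}. Combining the two relations gives
\[
2m=2N_{\mathrm{in}}+N_\partial\le 2(N_{\mathrm{in}}+N_\partial)\le 2m.
\]
Equality must hold throughout, so \(N_\partial=0\) and \(N_{\mathrm{in}}=m\), which forces \(\deg Q_m=m\) and \(N_{\mathrm{out}}=0\). This is exactly \eqref{eq:index-conclusion}.

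The main obstacle is the bookkeeping at boundary roots, where \(N_\partial\) could a priori carry half of the count. This requires checking two things. First, the identity \eqref{eq:logderiv} must survive on the punctured boundary, which Lemma \ref{lem:boundary-root-local} supplies by bounding the imaginary part near each root. Second, the principal value in Proposition \ref{prop:half-index} must be taken with the same symmetric arclength cutoffs as those used on the curvature side. The curvature side is absolutely integrable, so its value there is unaffected by the cutoffs. Lemma \ref{lem:half-residue} uses exactly the symmetric arclength cutoff, so the two sides match without further work. The inequality step then excludes boundary roots automatically, with no separate local argument needed.
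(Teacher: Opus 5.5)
Your proposal is correct and follows essentially the same route as the paper. You integrate the logarithmic-derivative identity with symmetric arclength cutoffs at any boundary roots to get \(2N_{\mathrm{in}}+N_{\partial}=2m\), then combine this with \(\deg Q_m\le m\) from the bidegree bound to force equality throughout; your inequality chain is just a rearrangement of the paper's step \(2d\ge 2N_{\mathrm{in}}+N_{\partial}=2m\).
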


\begin{proof}
By Lemma \ref{lem:log-derivative} and Lemma \ref{lem:boundary-root-local},
\[
\PV\operatorname{Im}
\oint_\gamma\frac{Q_m'}{Q_m}\,dz
=
m\int_\gamma\kappa\,ds.
\]
Since \(\gamma\) is positively oriented and strictly convex,
\[
\int_\gamma\kappa\,ds=2\pi.
\]
Thus
\[
\PV\operatorname{Im}
\oint_\gamma\frac{Q_m'}{Q_m}\,dz
=
2\pi m.
\]
By Proposition \ref{prop:half-index},
\begin{equation}
2N_{\mathrm{in}}+N_{\partial}=2m.
\label{eq:index-identity}
\end{equation}

Set
\[
d=\deg Q_m.
\]
By Proposition \ref{prop:bidegree}, \(d\leq m\). On the other hand,
\[
2d
=
2N_{\mathrm{in}}+2N_{\partial}+2N_{\mathrm{out}}
\geq
2N_{\mathrm{in}}+N_{\partial}
=
2m.
\]
Hence \(d\geq m\), so \(d=m\). Equality further gives
\[
0=2d-2m=N_{\partial}+2N_{\mathrm{out}},
\]
and therefore
\[
N_{\partial}=N_{\mathrm{out}}=0,
\qquad
N_{\mathrm{in}}=m.
\]
\end{proof}

\begin{remark}[What the index theorem proves---and what it does not]
The index argument eliminates every lower-degree or boundary-root stratum at once: \(Q_m\) has full degree and every root is strictly interior.  No separate low-degree analysis is required.  However, the argument contains no mechanism forcing two interior roots to agree.  In particular, Proposition \ref{prop:index-rigidity} is compatible with \(m\) distinct points in \(\Omega\).  The multiplicity statement needed for circularity is the independent content of the next section.
\end{remark}

\section{Behavior at infinity forces the top roots to coalesce}
\label{sec:poles}

This section is the second rigidity hinge.  Its input from the preceding section is only
\[
\deg P_m=\deg Q_m=m
\quad\text{and}\quad
\operatorname{Zeros}(Q_m)\subset\Omega.
\]
The new task is to prove that the \(m\) roots counted there are one and the same root.

The logic has two steps.  First, the top and next-to-top reflection identities are compared at a hypothetical simultaneous pole of \(z\) and \(w\).  The top mode fixes the leading coefficient ratio, while the next mode fixes the same ratio with one lower exponent and the opposite parity.  These two balances are incompatible, so simultaneous poles do not exist.  Second, compactness guarantees a pole of the nonconstant meromorphic function \(z\); because it is now one-sided, the top identity alone forces the finite limiting value of \(w\) to be a zero of \(P_m\) of multiplicity \(m\).  Algebraic conjugation then gives the complete collapse of \(Q_m\).

We now work on the compact normalization \(\widehat C\). By Proposition \ref{prop:global-bridge}, the identities
\begin{equation}
P_m(w)u^m=(-1)^mQ_m(z)
\label{eq:pole-top}
\end{equation}
and
\begin{equation}
\begin{aligned}
&\bigl(P_{m-1}(w)+mzP_m(w)+r^2P_m'(w)\bigr)u^{m-1}\\
&\qquad=
(-1)^{m-1}
\bigl(Q_{m-1}(z)+mwQ_m(z)+r^2Q_m'(z)\bigr)
\end{aligned}
\label{eq:pole-next}
\end{equation}
hold meromorphically on \(\widehat C\), where
\[
u=\frac{dz}{dw}.
\]

By Proposition \ref{prop:index-rigidity},
\[
\deg P_m=\deg Q_m=m.
\]
Write
\begin{equation}
P_m(w)=pw^m+\cdots,
\qquad
Q_m(z)=qz^m+\cdots,
\qquad
p,q\neq0.
\label{eq:leading-pq}
\end{equation}

Every point of \(\widehat C\) lying over the line at infinity falls into exactly one of three cases:
\begin{enumerate}
\item both \(z\) and \(w\) have poles;
\item \(z\) has a pole and \(w\) is finite;
\item \(w\) has a pole and \(z\) is finite.
\end{enumerate}
There is no fourth case because \(z\) and \(w\) are meromorphic on the compact normalization.

\subsection{Simultaneous poles are impossible}

\begin{proposition}[Universal simultaneous-pole obstruction]\label{prop:no-simultaneous}
Assume \(m\geq2\). There is no point of \(\widehat C\) at which both \(z\) and \(w\) have poles.
\end{proposition}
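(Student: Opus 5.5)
The plan is a local valuation computation at a hypothetical simultaneous pole, comparing the leading Laurent coefficients of the two global identities \eqref{eq:pole-top} and \eqref{eq:pole-next}. Suppose, for contradiction, that \(\xi_0\in\widehat C\) is a point where both \(z\) and \(w\) have poles. Choose a local parameter \(\xi\) centered at \(\xi_0\) and write
\[
z=\alpha\,\xi^{-a}\bigl(1+O(\xi)\bigr),
\qquad
w=\beta\,\xi^{-b}\bigl(1+O(\xi)\bigr),
\qquad
a,b\geq1,\quad \alpha,\beta\neq0 .
\]
Since \(u=dz/dw\) is meromorphic by Proposition \ref{prop:u-meromorphic}, termwise differentiation gives
\[
u=\frac{a\alpha}{b\beta}\,\xi^{\,b-a}\bigl(1+O(\xi)\bigr).
\]
This leading coefficient is nonzero, which is the reason for working with pole orders and leading coefficients rather than with any affine representation of \(u\).

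\textbf{Step 1 (top mode).} By Proposition \ref{prop:index-rigidity} and \eqref{eq:leading-pq}, we have \(P_m(w)=p\beta^m\xi^{-mb}(1+O(\xi))\) and \(Q_m(z)=q\alpha^m\xi^{-ma}(1+O(\xi))\). Substituting into \eqref{eq:pole-top}, both sides have order \(-ma\), so the orders give no information. Equating the leading coefficients gives
\[
p\,(a/b)^m=(-1)^m q .
\]
This is the consistent leading-order relation mentioned in Remark \ref{rem:mode-roles}; by itself it is not a contradiction.

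\textbf{Step 2 (next-to-top mode).} I would identify the dominant term on each side of \eqref{eq:pole-next}.

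On the left, the bracket contains three terms:
\begin{enumerate}
\item \(P_{m-1}(w)\), of order at least \(-(m-1)b\), since \(\deg P_{m-1}\le m\) would give at worst \(-mb\), and I must check this;
\item \(mzP_m(w)\), of exact order \(-a-mb\) with leading coefficient \(mp\alpha\beta^m\);
\item \(r^2P_m'(w)\), of order \(-(m-1)b\).
\end{enumerate}
The bidegree bound only gives \(\deg P_{m-1}\le m\), so \(P_{m-1}(w)\) has order at least \(-mb\). Even in that worst case this is strictly greater than \(-a-mb\), because \(a\ge1\). Hence \(mzP_m(w)\) strictly dominates, and no cancellation can occur.

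Multiplying by \(u^{m-1}\), the left side has exact order
\[
-a-mb+(m-1)(b-a)=-ma-b,
\]
with leading coefficient \(mp\alpha\beta^m\bigl(a\alpha/(b\beta)\bigr)^{m-1}\).

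Symmetrically, on the right the term \(mwQ_m(z)\) has exact order \(-b-ma\) with coefficient \(mq\beta\alpha^m\). The terms \(Q_{m-1}(z)\) and \(r^2Q_m'(z)\) have orders at least \(-ma>-ma-b\).

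The orders on the two sides agree, so again only the coefficients carry information. Equating them and cancelling the nonzero factor \(m\alpha^m\beta\) gives
\[
p\,(a/b)^{m-1}=(-1)^{m-1}q .
\]

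\textbf{Step 3 (contradiction).} Dividing the relation of Step 1 by that of Step 2, which is legitimate since \(p,q\neq0\), yields \(a/b=-1\). This is impossible because \(a,b\) are positive pole orders, so no simultaneous pole exists.

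The main obstacle I anticipate is the dominance bookkeeping in Step 2. It must be verified that the \(mzP_m\) and \(mwQ_m\) terms strictly dominate all other contributions, using only the bidegree bound of Proposition \ref{prop:bidegree} for \(P_{m-1},Q_{m-1}\) and the exact degrees from Proposition \ref{prop:index-rigidity}. The strict inequalities \(-mb>-a-mb\) and \(-ma>-ma-b\) are exactly what \(a,b\ge1\) provides, so I expect this check to go through cleanly. The hypothesis \(m\ge2\) is used only so that \eqref{eq:pole-next} is a genuine, distinct mode identity of index \(m-1\ge1\) among the reflection identities \eqref{eq:reflection-mode}.
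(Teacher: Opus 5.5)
Your proposal is correct and follows the paper's proof essentially step for step. It uses the same local expansions at a hypothetical simultaneous pole, with $u=\frac{a\alpha}{b\beta}\xi^{b-a}(1+O(\xi))$, and the same top-mode balance $p(a/b)^m=(-1)^mq$. It also uses the same strict dominance of $mzP_m(w)$ and $mwQ_m(z)$, relying only on $\ord_\xi P_{m-1}(w)\ge -mb$ and $\ord_\xi Q_{m-1}(z)\ge -ma$, to get $p(a/b)^{m-1}=(-1)^{m-1}q$, and dividing the two relations gives the paper's contradiction $a/b=-1$. One small cleanup: your self-corrected remark in item 1 should simply read that $P_{m-1}(w)$ has order at least $-mb$, or is identically zero.
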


\begin{proof}
Suppose such a point exists. Choose a local parameter \(\xi\) vanishing there and write
\begin{equation}
z=A\xi^{-a}(1+o(1)),
\qquad
w=B\xi^{-b}(1+o(1)),
\label{eq:sim-poles}
\end{equation}
where
\[
a,b\in\mathbb{N}_{>0},
\qquad
A,B\neq0.
\]
Since
\[
u=\frac{dz}{dw},
\]
we obtain
\begin{equation}
u=
\frac{aA}{bB}\xi^{b-a}(1+o(1)).
\label{eq:u-sim}
\end{equation}

Substitute \eqref{eq:sim-poles} and \eqref{eq:u-sim} into the top equation \eqref{eq:pole-top}. Using \eqref{eq:leading-pq},
\[
pB^m\left(\frac{aA}{bB}\right)^m\xi^{-ma}
=
(-1)^mqA^m\xi^{-ma}
+o(\xi^{-ma}).
\]
Thus
\begin{equation}
\boxed{
q=(-1)^mp\left(\frac ab\right)^m.
}
\label{eq:q-top}
\end{equation}

Now examine the next-to-top coefficient
\[
A_{m-1}
=
P_{m-1}(w)+mzP_m(w)+r^2P_m'(w).
\]
The three valuations satisfy
\begin{align*}
\ord_\xi P_{m-1}(w)&\geq-mb,\\
\ord_\xi\bigl(mzP_m(w)\bigr)&=-a-mb,\\
\ord_\xi P_m'(w)&=-(m-1)b.
\end{align*}
Because \(a,b>0\),
\[
-a-mb<-mb,
\qquad
-a-mb<-(m-1)b.
\]
Therefore \(mzP_m(w)\) is the unique dominant term:
\begin{equation}
A_{m-1}
=
mpAB^m\xi^{-a-mb}(1+o(1)).
\label{eq:A-next-dom}
\end{equation}
Similarly,
\begin{equation}
B_{m-1}
=
mqBA^m\xi^{-b-ma}(1+o(1)).
\label{eq:B-next-dom}
\end{equation}

Multiplying \eqref{eq:A-next-dom} by
\[
u^{m-1}
=
\left(\frac{aA}{bB}\right)^{m-1}
\xi^{(m-1)(b-a)}(1+o(1))
\]
and comparing with \eqref{eq:B-next-dom} in \eqref{eq:pole-next} gives
\begin{equation}
\boxed{
q=(-1)^{m-1}p\left(\frac ab\right)^{m-1}.
}
\label{eq:q-next}
\end{equation}

Comparing \eqref{eq:q-top} and \eqref{eq:q-next} and canceling nonzero factors yields
\[
-\frac ab=1,
\]
that is,
\[
a=-b,
\]
impossible because \(a,b>0\).
\end{proof}

\begin{remark}[Why the next-to-top mode is essential]\label{rem:next-mode-essential}
The top balance \eqref{eq:q-top} is perfectly consistent for positive \(a,b\); by itself it does not rule out a simultaneous pole.  The contradiction appears only after the next-to-top balance \eqref{eq:q-next} is imposed.  Its reflection parity differs by one and its exponent of \(a/b\) is lower by one.  Dividing the two relations gives \(-a/b=1\), impossible for positive pole orders.  Thus the exclusion of simultaneous poles is genuinely a two-mode argument.
\end{remark}

\begin{remark}[No hidden resonance]
The dominance in \eqref{eq:A-next-dom} is strict even when \(\deg P_{m-1}=m\). The extra pole contributed by \(z\) makes \(mzP_m(w)\) strictly more singular than both \(P_{m-1}(w)\) and \(P_m'(w)\). Thus no large-degree resonance occurs.
\end{remark}

\subsection{A one-sided pole forces complete root coalescence}

After Proposition \ref{prop:no-simultaneous}, every pole of one coordinate is a finite point for the other coordinate.  The next proposition explains why this one-sided behavior is exactly what converts a degree statement into a multiplicity statement.

\begin{proposition}[One-sided pole lemma]\label{prop:one-sided}
Suppose \(z\) has a pole at a point of \(\widehat C\) while \(w\to\beta\in\mathbb{C}\). Then \(\beta\) is a root of multiplicity \(m\) of \(P_m\). Symmetrically, if \(w\) has a pole while \(z\to\alpha\in\mathbb{C}\), then \(\alpha\) is a root of multiplicity \(m\) of \(Q_m\).
\end{proposition}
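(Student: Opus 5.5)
The plan is a direct valuation count in the top identity \eqref{eq:pole-top} at the given point. It uses only two inputs: the top identity, and the degree information $\deg P_m=\deg Q_m=m$ from Proposition~\ref{prop:index-rigidity}. The next-to-top identity is not needed here.

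Choose a local parameter $\xi$ on $\widehat C$ vanishing at the point. Write
\[
z=A\xi^{-a}(1+o(1)),\qquad w=\beta+C\xi^{c}(1+o(1)),
\]
with $a\geq1$, $A\neq0$, $c\geq1$, $C\neq0$. Here $c$ is finite because $w$ is a nonconstant meromorphic function on $\widehat C$: it varies along the physical oval. Differentiating gives $dz=-aA\xi^{-a-1}(1+o(1))\,d\xi$ and $dw=cC\xi^{c-1}(1+o(1))\,d\xi$. Hence
\[
\ord_\xi u=\ord_\xi\frac{dz}{dw}=-a-c .
\]

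Now compare orders on both sides of \eqref{eq:pole-top}.
\begin{itemize}
\item On the right, $\deg Q_m=m$ with leading coefficient $q\neq0$, so $\ord_\xi Q_m(z)=-ma$.
\item On the left, let $\nu\geq0$ be the multiplicity of $\beta$ as a root of $P_m$, and write $P_m(w)=(w-\beta)^\nu S(w)$ with $S(\beta)\neq0$. Then $\ord_\xi P_m(w)=\nu c$, so $\ord_\xi\bigl(P_m(w)u^m\bigr)=\nu c-m(a+c)$.
\end{itemize}
Equating the two orders gives $\nu c-ma-mc=-ma$, hence $\nu c=mc$ and so $\nu=m$. Since $\deg P_m=m$, this forces $P_m(w)=p(w-\beta)^m$.

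The symmetric statement follows the same way. If $w$ has a pole of order $b$ and $z\to\alpha$ with $z-\alpha$ of order $c\geq1$, then $\ord_\xi u=b+c$. The identity then reads $\mu c+m(b+c)\cdot0$ only after care with signs, so it is cleanest to compare orders directly:
\begin{itemize}
\item the left side has order $\ord_\xi\bigl(P_m(w)u^m\bigr)=-mb+m(b+c)=mc$;
\item the right side has order $\ord_\xi Q_m(z)=\mu c$, where $\mu$ is the multiplicity of $\alpha$ as a root of $Q_m$.
\end{itemize}
This gives $\mu=m$. Alternatively, the symmetric case follows by applying the first case to the algebraically conjugate configuration, using $Q_m(Z)=\overline{P_m(\overline Z)}$ and the conjugation symmetry of the identities.

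The only delicate point is ensuring that the leading orders are exactly as claimed. For $Q_m(z)$ this needs $q\neq0$, which is the full-degree conclusion of Proposition~\ref{prop:index-rigidity}. For $u$ it needs $c<\infty$, which holds because $w$ is nonconstant, so no cancellation can occur. With these in place the argument is pure bookkeeping of orders.
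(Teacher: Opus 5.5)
Your proof is correct and follows the paper's argument: the same local expansions, $\ord_\xi u=-a-c$, and the order comparison $\nu c-m(a+c)=-ma$ in the top identity using $\deg Q_m=m$. For the symmetric case you carry out the analogous count ($mc=\mu c$) explicitly where the paper simply interchanges $z$ and $w$; the stray sentence ``The identity then reads $\mu c+m(b+c)\cdot0$\ldots'' is meaningless and should be deleted, but the computation that follows it is right.
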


\begin{proof}
Write
\begin{equation}
z=A\xi^{-a}(1+o(1)),
\qquad
w=\beta+B\xi^b(1+o(1)),
\label{eq:one-sided-exp}
\end{equation}
with
\[
a,b>0,
\qquad
A,B\neq0.
\]
The positive integer \(b\) exists because \(w\) is a nonconstant meromorphic function: if it were locally constant, it would be globally constant by analytic continuation, impossible on the physical oval where \(w=\overline z\) and \(z\) is nonconstant.

From
\[
u=\frac{dz}{dw}
\]
we obtain
\begin{equation}
u=
-\frac{aA}{bB}\xi^{-a-b}(1+o(1)).
\label{eq:u-one-sided}
\end{equation}

Suppose \(\beta\) is a root of \(P_m\) of multiplicity \(n\), where \(0\leq n\leq m\):
\[
P_m(w)=c(w-\beta)^n+\cdots,
\qquad
c\neq0.
\]
Then
\[
P_m(w)u^m
\]
has valuation
\[
nb-ma-mb.
\]
Because \(\deg Q_m=m\), the right side of \eqref{eq:pole-top} has valuation
\[
-ma.
\]
Equality of the two nonzero meromorphic functions therefore forces
\[
nb-ma-mb=-ma.
\]
Hence
\[
(n-m)b=0.
\]
Since \(b>0\),
\[
n=m.
\]
The symmetric statement follows by interchanging \(z\) and \(w\).
\end{proof}

\begin{corollary}[Universal root collapse]\label{cor:root-collapse}
For every \(m\geq2\), there exist \(\lambda\neq0\) and \(a\in\Omega\) such that
\begin{equation}
\boxed{
Q_m(z)=\lambda(z-a)^m.
}
\label{eq:root-collapse}
\end{equation}
\end{corollary}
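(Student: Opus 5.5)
The corollary follows by combining Proposition \ref{prop:no-simultaneous}, Proposition \ref{prop:one-sided}, and Proposition \ref{prop:index-rigidity}. The plan is to find a point of \(\widehat C\) at which \(w\) has a pole, apply the one-sided pole lemma there, and then use the degree count to finish.

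First we produce the pole. The function \(w\) is a nonconstant meromorphic function on the compact Riemann surface \(\widehat C\); it is nonconstant because on the physical oval \(w=\overline z\) and \(z\) is nonconstant there. A nonconstant meromorphic function on a compact Riemann surface cannot be holomorphic everywhere, since it would then be constant by the maximum principle. So \(w\) has at least one pole \(\xi_0\in\widehat C\).

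Next we identify the type of this pole. Since \(m\geq2\), Proposition \ref{prop:no-simultaneous} says that \(z\) does not also have a pole at \(\xi_0\). Hence \(z\) tends to a finite value \(\alpha\in\mathbb C\) there, and we are in case (3) of the trichotomy. The symmetric half of Proposition \ref{prop:one-sided} then shows that \(\alpha\) is a root of \(Q_m\) of multiplicity \(m\). One detail needs checking: that proof requires \(z\) to be nonconstant near \(\xi_0\), i.e. \(z=\alpha+A\xi^{a}(1+o(1))\) with \(a>0\). This holds for the same analytic-continuation reason as for \(w\).

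To conclude, Proposition \ref{prop:index-rigidity} gives \(\deg Q_m=m\). A polynomial of degree \(m\) with a root of multiplicity \(m\) at \(\alpha\) must equal \(\lambda(z-\alpha)^m\), where \(\lambda\neq0\) is its leading coefficient \(q\). Setting \(a:=\alpha\), Proposition \ref{prop:index-rigidity} also places every root of \(Q_m\) strictly inside \(\Omega\), so \(a\in\Omega\).

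The substantive obstruction, the exclusion of simultaneous poles, is already supplied by Proposition \ref{prop:no-simultaneous}. The only step that needs care here is confirming that the hypotheses of Proposition \ref{prop:one-sided} hold at the chosen pole: a genuine pole of \(w\), a finite and nonconstant local expansion of \(z\), and \(\deg P_m=m\). The last condition follows from \(\deg Q_m=m\) together with \eqref{eq:Qj}.
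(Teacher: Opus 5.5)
Your argument is correct and is the paper's proof with the roles of $z$ and $w$ interchanged. The paper takes a pole of $z$, obtains an $m$-fold root of $P_m$ from Proposition~\ref{prop:one-sided}, and conjugates to get $Q_m$; you take a pole of $w$ and apply the symmetric half directly to $Q_m$, finishing with the same appeal to Proposition~\ref{prop:index-rigidity}, and your explicit use of $\deg Q_m=m$ to pass from an $m$-fold root to the full factorization makes precise a step the paper leaves implicit.
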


\begin{proof}
The meromorphic function \(z\) is nonconstant on the compact Riemann surface \(\widehat C\), so it has a pole. By Proposition \ref{prop:no-simultaneous}, \(w\) is finite at that point. Proposition \ref{prop:one-sided} therefore gives
\[
P_m(w)=\overline\lambda(w-\overline a)^m
\]
for some \(a\in\mathbb{C}\). Hence
\[
Q_m(z)=\lambda(z-a)^m.
\]
By Proposition \ref{prop:index-rigidity}, every root of \(Q_m\) lies strictly inside \(\Omega\), so \(a\in\Omega\).
\end{proof}

\section{From root collapse to circularity}
\label{sec:mainproof}

The two independent rigidity hinges now meet.  Proposition \ref{prop:index-rigidity} supplies the location of the root, while Corollary \ref{cor:root-collapse} supplies its full multiplicity:
\[
\underbrace{N_{\mathrm{in}}=m}_{\text{boundary index}}
\quad+\quad
\underbrace{Q_m(z)=\lambda(z-a)^m}_{\text{valuation at infinity}}
\quad\Longrightarrow\quad
a\in\Omega.
\]
Neither ingredient is redundant.  Root collapse without the index theorem would not locate \(a\) relative to the physical oval, while the index theorem without the pole argument would allow several distinct interior roots.  Once both are known, the remaining step is elementary geometry: the top reflection identity fixes the angle between the boundary tangent and the radial direction from \(a\), and closedness forces the radius to be constant.

For \(m=1\), full degree already means a single root, so the index theorem completes the algebraic part directly.  For \(m\geq2\), complete multiplicity is the additional content of the pole argument.

\begin{lemma}[An \(m\)-fold top root forces a circle]\label{lem:mf-root-circle}
Suppose for some \(m\geq1\),
\[
Q_m(z)=\lambda(z-a)^m,
\qquad
\lambda\neq0,
\qquad
a\in\Omega.
\]
Then \(\gamma\) is a circle centered at \(a\).
\end{lemma}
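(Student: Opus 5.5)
Substitute $Q_m(z)=\lambda(z-a)^m$ into the top reflection identity on the physical boundary, \eqref{eq:top-real}:
\[
\overline{\lambda}\,\overline{(z-a)}^{\,m}\,t^{2m}=(-1)^m\lambda (z-a)^m .
\]
Since $a\in\Omega$, the factor $z-a$ never vanishes on $\gamma$. Write $z(s)-a=\rho(s)e^{\ii\theta(s)}$ with $\rho>0$ and $\theta$ a continuous lift, and $t=e^{\ii\tau}$. The identity becomes
\[
e^{2\ii m(\tau-\theta)}=(-1)^m\lambda/\overline\lambda ,
\]
a unimodular constant. So $m(\tau-\theta)$ is constant modulo $\pi$. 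By continuity, $\psi:=\tau-\theta$ is a genuine constant, the angle between the tangent and the radial direction from $a$.

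Next, differentiate $z-a=\rho e^{\ii\theta}$ with respect to arclength:
\[
t=e^{\ii\tau}=(\rho'+\ii\rho\theta')e^{\ii\theta}.
\]
Hence
\[
\rho'+\ii\rho\theta'=e^{\ii\psi},
\qquad\text{so}\qquad
\rho'=\cos\psi,\quad \rho\theta'=\sin\psi .
\]
Thus $\rho$ has constant derivative in $s$. Because $\gamma$ is a closed curve, $\rho$ is periodic in $s$, so $\rho'=\cos\psi$ integrates to zero over a period. This forces $\cos\psi=0$ and $\rho$ constant. Therefore $\gamma$ lies on a circle centered at $a$. As a Jordan curve contained in that circle, it is the whole circle, and $\Omega$ is the disk.

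The main point needing care is the passage from "constant modulo $\pi/m$" to "constant". Continuity of $\tau-\theta$ handles it, and this is where $a\in\Omega$ is used: it gives $z\neq a$ on $\gamma$, so $\theta$ is continuous.

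Alternatively, one can avoid lifts altogether. Use Lemma \ref{lem:log-derivative} with $Q'/Q=m/(z-a)$, which gives
\[
\operatorname{Im}\!\bigl(t/(z-a)\bigr)=\kappa .
\]
Also compute
\[
\operatorname{Re}\!\bigl(t/(z-a)\bigr)=\tfrac12\,\frac{d}{ds}\log|z-a|^2 .
\]
Then check directly that the argument of $t/(z-a)$ is constant. I expect the lifted-angle argument above to be the cleanest route, and I would present that one.
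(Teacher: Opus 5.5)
Your proof is correct and follows essentially the same route as the paper: substitute the $m$-fold root into the top reflection identity, use continuity to make the tangent--radial angle constant, read off $\rho'=\cos\psi$, and use closedness to force $\cos\psi=0$. The only difference is that the paper works directly with the unimodular function $\eta=t\,\overline{(z-a)}/|z-a|$, whose $2m$-th power is constant, instead of continuous lifts $\tau,\theta$; this avoids choosing lifts but is otherwise the same argument.
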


\begin{proof}
On the real boundary,
\[
P_m(\overline z)=\overline{\lambda}(\overline z-\overline a)^m.
\]
The top reflection equation \eqref{eq:top-real} becomes
\begin{equation}
\overline\lambda(\overline z-\overline a)^m t^{2m}
=
(-1)^m\lambda(z-a)^m.
\label{eq:mfold-top}
\end{equation}
Because \(a\in\Omega\), the boundary never passes through \(a\). Define the continuous unit complex function
\[
\eta(s)
:=
t(s)\,
\frac{\overline{z(s)-a}}{|z(s)-a|}.
\]
Equation \eqref{eq:mfold-top} implies
\[
\eta(s)^{2m}
=
(-1)^m\frac{\lambda}{\overline\lambda}
=:\sigma,
\qquad
|\sigma|=1.
\]
Thus the image of the connected curve under \(\eta\) is contained in the finite set of \(2m\)-th roots of \(\sigma\). By continuity and connectedness,
\[
\eta(s)\equiv e^{\ii\delta}
\]
for some constant \(\delta\in\mathbb{R}\).

Set
\[
\rho(s)=|z(s)-a|.
\]
Then
\[
\frac{d\rho}{ds}
=
\operatorname{Re}
\left(
t(s)\frac{\overline{z(s)-a}}{|z(s)-a|}
\right)
=
\operatorname{Re}\eta(s)
=
\cos\delta,
\]
so \(d\rho/ds\) is constant. Closedness gives
\[
0=\rho(L)-\rho(0)=L\cos\delta,
\]
where \(L\) is the perimeter of \(\gamma\). Hence
\[
\cos\delta=0,
\]
so \(\rho\) is constant. Therefore
\[
|z-a|=R
\]
on \(\gamma\), and \(\gamma\) is a circle centered at \(a\).
\end{proof}

\begin{proof}[Proof of Theorem \ref{thm:main}]
Let \(m\) be the minimal true Fourier degree of the nonconstant polynomial first integral. By Lemma \ref{lem:m-zero}, one has \(m\geq1\).

If \(m=1\), Proposition \ref{prop:index-rigidity} gives
\[
\deg Q_1=1
\]
and places its unique root \(a\) inside \(\Omega\). Thus
\[
Q_1(z)=\lambda(z-a),
\]
and Lemma \ref{lem:mf-root-circle} gives circularity.

Assume \(m\geq2\). Corollary \ref{cor:root-collapse} gives
\[
Q_m(z)=\lambda(z-a)^m
\]
with \(a\in\Omega\). Lemma \ref{lem:mf-root-circle} again implies that \(\gamma\) is a circle. Therefore \(\Omega\) is a disk.
\end{proof}

\section{Low-degree illustrations of the general mechanism}
\label{sec:lowdegree}

The theorem is not proved by classifying small degrees separately.  Nevertheless, degrees three and four make the universal parity obstruction especially transparent, so we record them as illustrations of the general calculation rather than as independent cases.

The cubic and quartic cases that motivate the general mechanism are now immediate corollaries.

\begin{corollary}[True cubic rigidity]\label{cor:cubic}
If the magnetic billiard admits a nonconstant polynomial first integral of minimal true Fourier degree \(m=3\), then
\[
Q_3(z)=\lambda(z-a)^3
\]
for some \(a\in\Omega\), and \(\Omega\) is a disk.
\end{corollary}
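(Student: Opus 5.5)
The corollary is the case \(m=3\) of Corollary \ref{cor:root-collapse} combined with Lemma \ref{lem:mf-root-circle}. I will derive it by specializing the general chain of results. The emphasis is on making the parity obstruction at infinity explicit for \(m=3\), since that is the illustrative content of this section.

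First, Lemma \ref{lem:top-nonzero} gives \(P_3,Q_3\not\equiv0\). Proposition \ref{prop:index-rigidity} then gives \(\deg Q_3=3\) with all three roots strictly inside \(\Omega\), so
\[
Q_3(z)=qz^3+\cdots
\quad\text{and}\quad
P_3(w)=pw^3+\cdots,
\qquad p=\overline q\neq0.
\]
By Proposition \ref{prop:global-bridge}, the top identity \(P_3(w)u^3=-Q_3(z)\) holds on the normalization \(\widehat C\). So does the next identity
\[
(P_2(w)+3zP_3(w)+r^2P_3'(w))u^2=Q_2(z)+3wQ_3(z)+r^2Q_3'(z).
\]

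Second, I exclude a simultaneous pole of \(z\) and \(w\). Suppose \(z\sim A\xi^{-a}\) and \(w\sim B\xi^{-b}\) with \(a,b>0\). Then \(u\sim (aA/bB)\xi^{b-a}\).
\begin{itemize}
\item The top identity forces \(q=-p(a/b)^3\).
\item In the next identity, \(3zP_3(w)\) strictly dominates \(P_2(w)\) and \(r^2P_3'(w)\) on the left. Symmetrically, \(3wQ_3(z)\) dominates on the right. This forces \(q=p(a/b)^2\).
\end{itemize}
Dividing the two relations gives \(a/b=-1\), which is impossible. I expect this step to be the main obstacle in a careful write-up. One must check the strict valuation dominance: \(P_2\) may have degree \(3\), but \(-a-3b<-3b\) still holds. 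One must also check that the leading coefficients do not cancel, which holds since \(p,q,A,B\neq0\). This is exactly Proposition \ref{prop:no-simultaneous} at \(m=3\).

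Third, \(z\) is nonconstant on the compact surface \(\widehat C\), so it has a pole, and by the second step \(w\to\beta\) finite there. Write \(w=\beta+B\xi^b(1+o(1))\) and let \(n\) be the multiplicity of \(\beta\) as a root of \(P_3\). Then \(u\sim\xi^{-a-b}\), and comparing valuations in the top identity gives \(nb-3a-3b=-3a\), so \(n=3\).

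Hence \(P_3(w)=\overline\lambda(w-\overline a)^3\), and conjugating gives \(Q_3(z)=\lambda(z-a)^3\), with \(a\in\Omega\) by the index theorem. Lemma \ref{lem:mf-root-circle} then shows that \(\gamma\) is a circle centered at \(a\), so \(\Omega\) is a disk.
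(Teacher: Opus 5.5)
Your proposal is correct and follows the paper's own route. The paper proves this corollary by citing Proposition \ref{prop:index-rigidity}, Corollary \ref{cor:root-collapse}, and Lemma \ref{lem:mf-root-circle}, and you have unrolled exactly those results at \(m=3\); your pole balances \(q=-p(a/b)^3\) and \(q=p(a/b)^2\) are the ones recorded in the paper's remark on degrees three and four, and the only point left implicit is that \(b>0\) and \(B\neq0\) in the one-sided expansion of \(w\) because \(w\) is nonconstant on \(\widehat C\).
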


\begin{proof}
Apply Proposition \ref{prop:index-rigidity}, Corollary \ref{cor:root-collapse}, and Lemma \ref{lem:mf-root-circle}.
\end{proof}

\begin{corollary}[True quartic rigidity]\label{cor:quartic}
If the magnetic billiard admits a nonconstant polynomial first integral of minimal true Fourier degree \(m=4\), then
\[
Q_4(z)=\lambda(z-a)^4
\]
for some \(a\in\Omega\), and \(\Omega\) is a disk.
\end{corollary}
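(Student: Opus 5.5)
This corollary is the specialization \(m=4\) of the general chain, so no new mechanism is required. I will apply, in order, the results established for arbitrary minimal true Fourier degree. Since \(4\geq 1\), Lemma \ref{lem:m-zero} is consistent with the hypothesis. Propositions \ref{prop:bidegree} and \ref{prop:reflection-identities} give the center polynomial of bidegree at most \((4,4)\) and the top identity \(P_4(w)u^4=Q_4(z)\) on the physical oval. The parity factor here is \((-1)^4=+1\).

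First, Proposition \ref{prop:index-rigidity} applies with \(m=4\). It gives \(\deg Q_4=4\), with all four roots, counted with multiplicity, strictly inside \(\Omega\).

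Second, since \(m=4\geq2\), I pass to the normalization \(\widehat C\) via Proposition \ref{prop:global-bridge}. There I invoke Proposition \ref{prop:no-simultaneous}. As a sanity check in this degree, the top balance reads \(q=p(a/b)^4\) and the next-to-top balance reads \(q=-p(a/b)^3\). Dividing gives \(a/b=-1\), which is impossible for positive pole orders. Hence every pole of \(z\) is one-sided.

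A pole of \(z\) exists because \(\widehat C\) is compact and \(z\) is nonconstant. At such a pole, Proposition \ref{prop:one-sided} forces the finite limit of \(w\) to be a fourfold root of \(P_4\). This is exactly Corollary \ref{cor:root-collapse}: \(Q_4(z)=\lambda(z-a)^4\), and combining with the location statement from Proposition \ref{prop:index-rigidity} gives \(a\in\Omega\).

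Finally, Lemma \ref{lem:mf-root-circle} with \(m=4\) shows that \(t\,\overline{(z-a)}/|z-a|\) is constant. Therefore \(|z-a|\) has constant arclength derivative, closedness forces that derivative to be zero, and \(\Omega\) is a disk centered at \(a\).

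The only point that needs checking is that the hypotheses of the general results are met in this instance. Specifically, minimality of the true degree must give \(P_4,Q_4\not\equiv0\), which is Lemma \ref{lem:top-nonzero}, and the case distinction \(m\geq2\) must apply, which it does. I expect no genuine obstacle.
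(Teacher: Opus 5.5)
Your proposal is correct and is exactly the paper's argument: the corollary is the case \(m=4\) of Proposition~\ref{prop:index-rigidity}, Corollary~\ref{cor:root-collapse}, and Lemma~\ref{lem:mf-root-circle}. Your parity check \(q=p(a/b)^4\) versus \(q=-p(a/b)^3\) reproduces the paper's own remark on the quartic mechanism.
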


\begin{proof}
Apply the same general theorem.
\end{proof}

\begin{remark}[Quartic root partitions]
In the quartic case, the boundary-index identity first forces four roots strictly inside the table. The universal pole argument then excludes every non-fourth-power partition, including
\[
3+1,\qquad
2+2,\qquad
2+1+1,\qquad
1+1+1+1.
\]
Thus the arbitrary-degree theorem completely subsumes the earlier root-pattern problem.
\end{remark}

\begin{remark}[The mechanism behind degrees three and four]
For \(m=3\), the top and next-to-top pole balances give
\[
q=-p\left(\frac ab\right)^3,
\qquad
q=p\left(\frac ab\right)^2.
\]
For \(m=4\), they give
\[
q=p\left(\frac ab\right)^4,
\qquad
q=-p\left(\frac ab\right)^3.
\]
The general proof shows that this alternating-parity contradiction is universal.
\end{remark}

\section{Conclusion and outlook}
\label{sec:discussion}

\subsection{The answer to the motivating question}

The question posed in the introduction was whether a smooth strictly convex noncircular table can support a polynomial first integral in a sufficiently strong constant magnetic field.  Theorem \ref{thm:main} gives a complete negative answer in the stated regime: the disk is the only possibility, independently of the finite degree of the integral.

The result strengthens the earlier strong-field theorem in one precise respect.  Bialy, Mironov, and Shalom excluded polynomial integrability for all but finitely many field magnitudes associated with a fixed noncircular table \cite{BialyMironovShalom2020}.  The present argument eliminates the remaining exceptional values.  It also shows that no separate degree-by-degree classification is needed: the same top-two-mode mechanism works for every minimal true Fourier degree.

The proof is best summarized as two branches that merge only at the final geometric step:
\[
\boxed{
\begin{array}{c}
\text{polynomial integral}\\[1mm]
\Downarrow\quad\text{Larmor-center reduction}\\[1mm]
\text{center polynomial and two leading reflection modes}
\end{array}}
\]
\[
\begin{aligned}
\text{Stage I:}\quad&
\text{top mode on the real oval}
\Longrightarrow
\deg Q_m=m,\quad \operatorname{Zeros}(Q_m)\subset\Omega,\\
\text{Stage II:}\quad&
\text{top two modes on }\widehat C
\Longrightarrow
Q_m(z)=\lambda(z-a)^m.
\end{aligned}
\]
\[
\text{Stage I + Stage II}
\Longrightarrow a\in\Omega
\Longrightarrow \text{constant radial distance}
\Longrightarrow \text{disk}.
\]
This split makes the logical roles precise.  The strong-field theory supplies the polynomial and algebraic setting.  The boundary-index argument controls the finite zero divisor but does not identify its support.  The two-mode valuation argument controls the branches at infinity and collapses that divisor to one point.  Elementary planar geometry then converts the repeated interior root into circularity.

\subsection{What the theorem does not prove}

Polynomial integrability is stronger than a general dynamical notion of integrability.  A totally integrable billiard may be described by a foliation of phase space by invariant curves or tori, and the present proof does not show that such a foliation produces a polynomial first integral.  Consequently, Theorem \ref{thm:main} does not settle the broader question of whether every totally integrable planar magnetic billiard in a strong constant field must be circular.  That problem belongs to the global rigidity tradition represented by Hopf-type methods and remains distinct from the algebraic theorem proved here; see \cite{Bialy2012,BialyMironovSurvey2025,OpenProblems2026}.

The theorem also relies on the strong-field inequality \eqref{eq:strong-field}.  This condition is used before the new rigidity mechanism begins: it guarantees the embedded tubular geometry, the Larmor-center annulus, the polynomiality theorem for the center-space integral, and the nonsingular algebraic parallel curves.  Once the reflection identities have been placed on the compact normalization, the boundary-index and pole arguments no longer use the numerical size of \(r\).

\subsection{Why the mechanism may be reusable}

The most portable part of the proof is the interaction between two consecutive reflection modes.  At a hypothetical simultaneous pole, the leading mode gives
\[
q=(-1)^mp(a/b)^m,
\]
whereas the next mode gives
\[
q=(-1)^{m-1}p(a/b)^{m-1}.
\]
The parity change forces the impossible relation \(a/b=-1\) for positive pole orders.  This is not a low-degree coincidence; it is a structural consequence of reflection.  Similar two-mode obstructions may occur in other billiard, magnetic-flow, or projective-reflection problems whenever three features are available:

\begin{enumerate}
\item a finite Laurent expansion produced by a polynomial or algebraic integral;
\item a reflection law relating positive and negative modes with alternating parity;
\item a compact algebraic normalization on which the coordinate and tangent data are meromorphic.
\end{enumerate}

In that sense, the paper contributes not only the classification theorem but also a general strategy: use winding on the physical oval to control finite zeros, and use valuations at infinity to force their collapse.

\subsection{Natural next questions}

The first extension is to weaken the regularity assumptions.  Real analyticity is already a consequence rather than a hypothesis, but the proof still begins with a smooth boundary and smooth coefficients of the first integral.  It would be useful to determine the minimal regularity under which the center-space construction, the inverse-offset argument, and the principal-value index formula remain valid.

A second direction is to identify other magnetic regimes in which the same algebraic bridge survives.  The pole argument itself is insensitive to the strength of the field; what is missing outside the present regime is a replacement for the global center-space polynomiality and nonsingular parallel-curve theory.

Finally, the relation between polynomial and total integrability remains the central conceptual gap.  A result showing that sufficiently regular invariant foliations generate an algebraic or finite-mode object would bring the broader rigidity problem within reach of the mechanism developed here.

\begin{question}
In a magnetic-field regime outside \eqref{eq:strong-field}, can one construct a global algebraic or meromorphic dual object whose two leading reflection modes still force root collapse?
\end{question}

\appendix
\section{An independent algebraicity bridge for fixed normal offsets}
\label{app:offset-bridge}

The main proof uses the published strong-field implication that algebraicity of the regular parallel curves yields algebraicity of the original billiard boundary.  The algebraic geometry of parallel and offset curves has a substantial classical literature; see, for example, \cite{FaroukiNeff1990,SendraSendra2000}.  For completeness, we give here a short independent elimination argument tailored to the fixed-offset implication needed in this paper.  This appendix is logically redundant for the main theorem, but it makes the algebraic bridge self-contained.

\begin{lemma}[Independent fixed-offset algebraicity bridge]\label{lem:appendix-offset-bridge}
Let \(\Gamma\subset\mathbb R^2\) be a regular real oval contained in an irreducible nonsingular affine algebraic curve, and let \(\gamma\) be a connected real-analytic embedded curve obtained from \(\Gamma\) by a regular fixed-distance inverse normal offset
\[
\gamma=\Gamma-rn_\Gamma,
\qquad r>0.
\]
Then \(\gamma\) is contained in a real affine algebraic curve.
\end{lemma}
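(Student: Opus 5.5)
The plan is an elimination argument on an incidence variety. Let \(g(x,y)\) be an irreducible real (or complex) polynomial defining the nonsingular affine curve containing \(\Gamma\). At a point \((X,Y)\in\Gamma\), the unit normal is
\[
n_\Gamma=\pm\frac{\nabla g}{|\nabla g|},
\]
and nonsingularity gives \(\nabla g\neq0\). The point \((x,y)=\Gamma-rn_\Gamma\) is therefore characterized polynomially by the system
\[
g(X,Y)=0,\qquad
(x-X)\,g_Y(X,Y)-(y-Y)\,g_X(X,Y)=0,\qquad
(x-X)^2+(y-Y)^2-r^2=0 .
\]
The middle equation says that \((x,y)-(X,Y)\) is parallel to \(\nabla g\). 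The sign ambiguity of \(n_\Gamma\) is harmless, because the system describes both offsets \(\Gamma\pm rn_\Gamma\) at once. This defines an algebraic incidence set \(I\subset\mathbb C^4\) in the variables \((X,Y,x,y)\). By construction the lift \(\{(\Gamma(\sigma),\gamma(\sigma))\}\) lies in \(I\).

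The first step is a dimension count. Over each regular point \((X,Y)\) of the curve \(\{g=0\}\), the last two equations cut out exactly two points \((x,y)\), since \(\nabla g\neq0\) determines the normal line. Hence the projection \(I\to\{g=0\}\) is finite of degree \(2\), and every irreducible component of \(I\) has dimension at most \(1\). Here the nonsingularity hypothesis is used: it excludes any fiber where \(\nabla g=0\) and both constraints degenerate, which would give a \(2\)-dimensional fiber.

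The second step is elimination. The projection \(\pi(X,Y,x,y)=(x,y)\) maps each component of \(I\) to a constructible set of dimension at most \(1\). By Chevalley's theorem, or concretely by taking resultants in \(X\) and then in \(Y\), its Zariski closure is contained in the zero set of a nonzero polynomial \(h(x,y)\). A concrete choice is to compute \(\operatorname{Res}_Y\) of suitable resultants in \(X\) of the three equations. Since \(\gamma=\pi(\text{lift})\subset\pi(I)\), the curve \(\gamma\) lies in \(\{h=0\}\). Replacing \(h\) by \(h\bar h\) (conjugating coefficients) makes the polynomial real, which gives a real affine algebraic curve containing \(\gamma\).

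The main obstacle is guaranteeing that \(h\not\equiv0\), that is, that \(\pi(I)\) is not dense in \(\mathbb C^2\). The dimension bound handles this, but only if \(I\) has no stray \(2\)-dimensional components. Such components could arise over points at which \(g=g_X=g_Y=0\), and the affine nonsingularity hypothesis rules these out. If the resultant route is used instead, the corresponding risk is that a resultant vanishes identically; the fallback is the abstract dimension argument above. Irreducibility of \(g\) is not strictly needed, but it keeps the projection finite on a single component. Connectedness of \(\gamma\) and real-analyticity ensure that the image is a genuine curve rather than finitely many points, though containment alone already suffices for the statement.
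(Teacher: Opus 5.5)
Your proof is correct. It uses the same incidence variety and the same $G\overline G$ trick as the paper, but it does the dimension count globally rather than locally. The paper proceeds in three steps:
\begin{enumerate}
\item It uses Baire to pick one irreducible component $\mathcal I_0$ containing the graph over a short arc $U\subset\Gamma$.
\item It shows $\dim\mathcal I_0=1$ from generic finiteness of the fibres alone, by checking $h_X^2+h_Y^2\not\equiv0$ on $V(h)$ via the real oval.
\item This gives a polynomial vanishing only on an arc of $\gamma$, so it needs real-analyticity and connectedness of $\gamma$ to extend the vanishing to all of $\gamma$.
\end{enumerate}
You instead bound $\dim I\le 1$ for the whole incidence set and use that $\gamma$ is the image of the full lift $\{(P,P-rn_\Gamma(P)):P\in\Gamma\}$. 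This removes both the Baire selection and the analytic-continuation step. It also shows that real-analyticity and connectedness of $\gamma$ are not needed for mere containment, as you note at the end. The paper's local route, in return, only ever needs finiteness of the generic fibre.

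Two statements in your fibre count should be corrected, although neither affects the conclusion.
\begin{enumerate}
\item Over a complex point of $V(g)$ with $\nabla g\neq0$, the fibre consists of the points $(X,Y)+\lambda(g_X,g_Y)$ with $\lambda^2(g_X^2+g_Y^2)=r^2$. Wherever $g_X^2+g_Y^2=0$ (isotropic normal, which generically occurs at complex points), the fibre is therefore empty, not two points. So $I\to V(g)$ is only quasi-finite, not finite of degree $2$: solutions escape to infinity as $g_X^2+g_Y^2\to0$. The bound \emph{at most two points} is all your argument needs.
\item Over an isolated singular point $(X_0,Y_0)$ the fibre would be the one-dimensional conic $(x-X_0)^2+(y-Y_0)^2=r^2$, not a two-dimensional set. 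A two-dimensional component of $I$ could arise only if $\nabla g$ vanished along a whole component of $V(g)$, that is, if $g$ were not squarefree. So $\dim I\le1$ really uses only reducedness of $g$, and nonsingularity is more than enough.
\end{enumerate}
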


\begin{proof}
Because \(\Gamma\) is a real oval of an irreducible algebraic curve, the complex irreducible component containing \(\Gamma\) is invariant under complex conjugation. After multiplying a defining polynomial by a nonzero scalar, we may therefore choose
\[
h(X,Y)\in\mathbb R[X,Y]
\]
irreducible with zero set containing \(\Gamma\).  For a point
\[
P=(X,Y)\in\Gamma
\]
and its inverse-offset point
\[
q=(x,y)=P-rn_\Gamma(P)\in\gamma,
\]
the following three equations hold:
\begin{align}
h(X,Y)&=0,
\label{eq:app-offset1}\\
(x-X)h_Y(X,Y)-(y-Y)h_X(X,Y)&=0,
\label{eq:app-offset2}\\
(x-X)^2+(y-Y)^2&=r^2.
\label{eq:app-offset3}
\end{align}
The second equation states that \(q-P\) is parallel to the algebraic normal, and the third fixes the offset distance.

Let \(\mathcal I\subset\mathbb C^4\) be the algebraic incidence set cut out by \eqref{eq:app-offset1}--\eqref{eq:app-offset3}.  Choose a nonempty open real-analytic arc \(U\subset\Gamma\), and let
\[
\mathcal G_U
=
\{(P,P-rn_\Gamma(P)):P\in U\}
\subset\mathcal I
\]
be the corresponding physical graph.  Write \(\mathcal I\) as the finite union of its irreducible components.  The preimages of these components under the real-analytic graph parametrization are closed and cover the parameter interval of \(U\); by the Baire category theorem, after shrinking \(U\) one irreducible component \(\mathcal I_0\) contains the entire local graph \(\mathcal G_U\).

Consider the projection
\[
\pi_P:\mathcal I_0\longrightarrow V(h),
\qquad
(P,q)\longmapsto P.
\]
Its image contains the open arc \(U\), hence is Zariski dense in the irreducible curve \(V(h)\).  The generic fiber is finite.  Indeed, the polynomial \(h_X^2+h_Y^2\) does not vanish identically on \(V(h)\): along the real nonsingular oval \(\Gamma\), the gradient \((h_X,h_Y)\) is a nonzero real vector, so \(h_X^2+h_Y^2>0\).  Hence, away from the proper algebraic subset where
\[
h_X^2+h_Y^2=0,
\]
equations \eqref{eq:app-offset2}--\eqref{eq:app-offset3} give
\[
q-P=\lambda(h_X,h_Y),
\qquad
\lambda^2(h_X^2+h_Y^2)=r^2,
\]
so there are at most two possible values of \(q\).  Consequently
\[
\dim\mathcal I_0=1.
\]

Now project to the original coordinates,
\[
\pi_q:\mathcal I_0\longrightarrow\mathbb C^2,
\qquad
(P,q)\longmapsto q.
\]
The Zariski closure of \(\pi_q(\mathcal I_0)\) has dimension at most one and contains the nonconstant open arc of \(\gamma\) corresponding to \(U\).  Hence there exists a nonzero polynomial
\[
G(x,y)\in\mathbb C[x,y]
\]
that vanishes on that arc.  Since \(\gamma\) is connected and real analytic, the function \(G|_\gamma\) is real analytic and vanishes on a nonempty open interval; therefore
\[
G|_\gamma\equiv0.
\]
Finally, replacing \(G\) by \(G\overline G\) if necessary gives a nonzero polynomial with real coefficients vanishing on \(\gamma\).  Thus \(\gamma\) is algebraic.
\end{proof}

\begin{remark}[Logical role of the appendix]
In the application to Theorem \ref{thm:main}, Proposition \ref{prop:auto-analyticity} supplies the real analyticity required in the last continuation step, while Theorem 1.1 of \cite{BialyMironovShalom2020} supplies the nonsingular algebraic parallel oval \(\Gamma=\gamma_{+r}\).  Thus Lemma \ref{lem:appendix-offset-bridge} independently reproduces the algebraicity bridge used in Proposition \ref{prop:gamma-algebraic}.
\end{remark}

\section*{Acknowledgments}

The author thanks the authors of the foundational works on algebraic and strong-field magnetic billiards, whose center-space framework makes the present argument possible. The author also thanks colleagues and collaborators whose broader discussions on exact integrability and rigidity motivated the search for a degree-independent obstruction.

\section*{Generative AI disclosure}

OpenAI's ChatGPT was used as an auxiliary research-assistance tool during the development and preparation of this work. The original research questions and initial scientific direction were conceived and initiated by the author; subsequent interaction with the model was used to explore additional directions, test candidate arguments, assist with symbolic and algebraic organization, perform adversarial consistency checks, identify potentially relevant literature, and improve the clarity and presentation of the manuscript.

All AI-generated outputs were treated as provisional and non-authoritative. Every theorem, proof, derivation, equation, citation, and scientific conclusion appearing in the final manuscript was independently examined, verified, and revised by the author. The identification and interpretation of the principal results, assessment of novelty and scientific significance, and final responsibility for the content of the manuscript rest solely with the author.

\end{document}